\newtheorem{lem}{Lemma}
\newtheorem{prop}{Proposition}
\definecolor{orange}{RGB}{255,107,0}
\definecolor{green}{RGB}{0,160,20}
\begin{document}

\title{Optimal Task Assignment and Power Allocation for NOMA Mobile-Edge Computing Networks}
\author{Fang Fang, Yanqing Xu,
	Zhiguo Ding,
	Chao Shen, Mugen Peng,\\
	and George K. Karagiannidis
	\thanks{F. Fang and Z. Ding are with School of Electrical and Electronic Engineering, The University of Manchester, M13 9PL, UK (e-mail: fang.fang@manchester.ac.uk, zhiguo.ding@manchester.ac.uk).}
	\thanks{Y. Xu and C. Shen are with the State Key Laboratory of Rail Traffic Control
		and Safety, Beijing Jiaotong University, Beijing 100044, China (e-mail: yanqing\_xu@bjtu.edu.cn, chaoshen@bjtu.edu.cn).}
	\thanks{M. Peng is with the Key Laboratory of Universal Wireless Communications,
		Ministry of Education, Beijing University of Posts and Telecommunications,
		Beijing 100876, China (e-mail: pmg@bupt.edu.cn).}

	\thanks{G. K. Karagiannidis is with the Department of Electrical and Computer Engineering, Aristotle University of Thessaloniki, 54124 Thessaloniki,
		Greece (e-mail: geokarag@auth.gr).}
}\maketitle

%##################################################################
\begin{abstract}
Mobile edge computing (MEC) can enhance the computing capability of mobile devices, and non-orthogonal multiple access (NOMA) can provide high data rates. Combining these two technologies can effectively benefit the network with spectrum and energy efficiency. In this paper, we investigate the task completion time minimization in NOMA multiuser MEC networks, where multiple users can offload their tasks simultaneously via the same frequency band. We adopt the \emph{partial} offloading, in which each user can partition its computation task into offloading computing and locally computing parts. We aim to minimize the maximum task latency among users by optimizing their tasks partition ratios and offloading transmit power. By considering the energy consumption and transmitted power limitation of each user, the formulated problem is quasi-convex. Thus, a bisection search (BSS) iterative algorithm is proposed to obtain the minimum task completion time. To reduce the complexity of the BSS algorithm and evaluate its optimality, we further derive the closed-form expressions of the optimal task partition ratio and offloading power for two-user NOMA MEC networks based on the analysed results. Simulation results demonstrate the convergence and optimality of the proposed a BSS algorithm and the effectiveness of the proposed optimal derivation. %It can also be concluded that the NOMA multiuser MEC can realize lower task completion time compared OMA to MEC networks. 
%high energy efficiency provided by the proposed power allocation scheme.
\end{abstract}

\begin{keywords}
	Delay minimization, MEC, NOMA, offload, optimization.
\end{keywords}
\section{Introduction}
Driven by the explosive emergence of new compute-intensive applications in the Internet of things (IoT), especially ultra-low-latency applications, such as virtual reality (VR) and augmented reality (AR), mobile edge computing (MEC) was proposed to enhance the computing capability of the mobile devices \cite{PMMECSurvey2017}. In order to meet extremely high data rate requirements, non-orthogonal multiple access (NOMA) has been considered as another promising technology in the next generation of wireless communications, due to its superior spectrum efficiency \cite{DingIEEEMag16,DingJSACSurvey2017,DaiSurvey2018}. Specifically, the multiuser superposition transmission scheme, which is a special case of NOMA, has been adopted in the third generation partnership project long term evolution advanced (3GPP-LTE-A) networks \cite{3GPP}. Therefore, to support multiple users and lower the transmission latency and energy consumption, the combination of NOMA and MEC is an inevitable trend in the next generation of wireless networks \cite{AKiani2018JIOT}.

\subsection{Related literature}
 In MEC communication systems, base stations (BSs) equipped with MEC servers, can provide cloud-like computing for the mobile users, which have computation intensive and delay sensitive tasks \cite{WShiIEEEIoT2016}. Due to the enhancement of computational capacity, MEC has been regarded as a key technology  for the next generation of  networks \cite{YMaoMECSureveys2017}. In multiuser MEC networks, due to the low computational capacity of the mobile devices, the majority of computing tasks can be offloaded to MEC server based BSs for remote computation. A MEC server computes the tasks much faster than the mobile devices, due to its considerable computing capability. After task computation, tasks results can be downloaded from the MEC server to the mobile users through downlinks \cite{SAMECSurvey2014,HZhangFogIEEEWC2017}. In the open literature, computational resource allocation (e.g., tasks assignment) and communication resource allocation (e.g., offloading power and subchannel allocation) have been investigated to improve the performance of multiuser MEC networks. In MEC offloading transmission, the tasks can be \emph{binary} offloaded (i.e., the computation task cannot be partitioned and must be either fully offloaded to the MEC server or computed locally) \cite{YMaoTWC2017,TXTranTVT2018,FWangTWC2018,SBIEEETWC2018,QPhamAccess2018,QPhamAccess19} and the \emph{partial} offloaded (i.e., the computation task of each user can be partitioned into two parts, one of which will be offloaded for remote computation, and the other part can be computed locally by the mobile devices) \cite{CYouTWC2017}. 

A very popular scheme is the power-domain NOMA, where the channel difference between users is exploited to multiplex different users on the same frequency band with different power levels \cite{SaitoIEEENoma13,DingIEEEMag16}. Thus multiple users can transmit signals simultaneously with lower interference than the orthogonal multiple access (OMA) system. This attracts lots of researchers' attention in recent years. Resource optimization, i.e. subchannel allocation, power allocation and user association, has been investigated to enhance the performance of NOMA networks \cite{YSunIEEETCOM17,FangIEEETrans16,FangJSAC17,HZhangMagzine2018}. Among these works, the weighted sum rate was maximized for the downlink full-duplex NOMA system via the proposed optimal power allocation and subchannel allocation in \cite{YSunIEEETCOM17}. To meet the requirement of green communications, the energy efficient resource allocation was investigated for the perfect channel statement (CSI) \cite{FangIEEETrans16} and the imperfect CSI \cite{FangJSAC17,HZhangMagzine2018}.

Even though  MEC and NOMA have been widely studied in recent years, only handful research works have been carried to study the combination of NOMA and multiuser MEC networks  \cite{ZDing2018TCOM,FWang2018TCOM,ZDing2018WCL,ZDing2018WCLDely,AKiani2018JIOT,YPanCL2018,XDiaoAccess2019D2D,LPSIC2018IoT}. To further enhance the spectrum efficiency and the performance of multiuser MEC networks, NOMA uplink transmission and downlink transmission were proposed to be applied to MEC \cite{ZDing2018TCOM}. In this system, multiple users can offload (download) their tasks (task results) to (from) the MEC server simultaneously via the same frequency band. The performance comparison of NOMA-MEC and OMA-MEC demonstrates that the NOMA-MEC system can achieve superior performance on latency reduction and energy consumption reduction \cite{ZDing2018TCOM}. Besides, resource allocation research works have been carried out to enhance the NOMA enable MEC system performance \cite{ZDing2018WCL,ZDing2018WCLDely,AKiani2018JIOT}. Among these research works, the energy consumption minimization was investigated for a multiuser NOMA-MEC system \cite{FWang2018TCOM,XDiaoAccess2019D2D}. In \cite{FWang2018TCOM}, an efficient algorithm of optimizing offloading tasks and offloading power levels of each user was proposed to minimize the system energy consumption for NOMA based MEC system \cite{FWang2018TCOM} and D2D-assisted and NOMA-based MEC system \cite{XDiaoAccess2019D2D}, respectively. Different from OMA-MEC and pure NOMA-MEC systems proposed in \cite{FWang2018TCOM,ZDing2018TCOM}, a hybrid  NOMA  scheme was proposed in \cite{ZDing2018WCL}, where a user can first offload parts of its task within the time slot allocated to another user and then offload the remaining task during the time slot solely occupied by itself. In this work, the power allocation and time allocation were optimized to minimize the energy consumption of offloading for a NOMA multiuser MEC network. Subsequently, to minimize the task delay, Dinkelbach's method and Newton's method were compared for the hybrid NOMA-MEC system \cite{ZDing2018WCLDely}, in which the simulation results illustrate that both algorithms converge to the same optimal point, but Netwon's method converges faster than Dinkelbach's method. Unlike the \emph{partial} offloading scheme, the authors in \cite{AKiani2018JIOT,FangGlobecome2019} focused on the independent and non-separate task. In this work, an efficient heuristic algorithm of user clustering and frequency and resource blocks allocation was proposed to minimize the system energy consumption \cite{AKiani2018JIOT}.
The energy efficient power allocation, time allocation and task assignment were proposed to minimize the energy consumption for a NOMA MEC network \cite{YPanCL2018}. Besides the computational resource, the successive interference cancellation (SIC) decoding order was optimized to reduce the task delay for NOMA enabled narrowband Internet of Things (NB-IoT) systems \cite{LPSIC2018IoT}.

\subsection{Contributions}
Different from the existing works, which mainly focus on energy minimization for NOMA multiuser MEC networks, we focus on the completion time minimization by considering the \emph{partial} offloading in NOMA multiuser MEC. The main contributions are listed as follows:
\begin{itemize}
\item In this paper, we apply NOMA into a multiuser MEC network where multiple users can offloading their task to the MEC server simultaneously via the same frequency band. Considering the \emph{partial} task offloading scheme, the energy consumption limitation and offloading power limitation, the task completion time minimization problem is formulated as a non-convex problem. Thus it is polynomial time unsolvable. By analysing the properties of the formulated problem, some significant insights are revealed, and the corresponding propositions and Lemma are proposed to equivalently transform the original formulated problem to a simplified form. Based on those analytical results, the quasi-convexity of the transformed problem is proved. Therefore, a bisection searching  (BSS) based algorithm is proposed to find the global optimal solution to the transformed problem. In the proposed algorithm, the original problem is solved by equivalently solving a series of feasibility subproblems. Moreover, we analyze the complexity of the proposed algorithm.  The convergence and optimality of BSS algorithm are evaluated by simulation results.  
\item Motivated by the practical applications, we focus on the two-user case to reduce the decoding complexity of SIC. To further reduce the complexity of the proposed BSS iterative algorithm, a corresponding proposition is proposed to equivalently transform the original problem to a convex problem. The convexity of the transformed problem is proved. Moreover, the closed-form expressions of the task partition ratios and offloading power are derived by exploiting the Lagrangian approach, i.e., Karush-Kuhn-Tucker (KKT) conditions, for the two-user NOMA MEC network. The simulation results demonstrate the proposed BSS iterative algorithm matches with our derived optimal solution, which reveals that our proposed BSS algorithm can converge to the optimal solution. %Compared with existing schemes, i.e., fully NOMA system and fully OMA system, our proposed algorithm can provide the minimum task completion time in the simulation results. 
\end{itemize}

\subsection{Organization}

The organization of this paper is as follows. We introduce
the system model and formulate the task completion time minimization problem in Section II. The BSS algorithm is proposed in Section III. In Section IV, the optimal closed-form solution derived for two-user case. Simulation results are presented in Section IV, and Section V concludes this paper.

\section{System Model and Problem Formulation}
\subsection{NOMA-enabled Multiuser MEC Networks}
In the NOMA multiuser MEC network, $M$ users are randomly distributed in the single cell where one BS equipped with the MEC server is located in the cell centre. We assume that all users and the BS are equipped with single antenna. The indices of users are defined as $m\in\{1,2,\cdots,M\}$. Denote the offloading task partial factor of User $m$ ($U_m$) by $\beta_m$, thus $(1-\beta_m)$ is the task partial factor of the task to locally compute at $U_m$, where $\beta_m \in (0,1)$. By implementing NOMA, the SIC technique is applied at the MEC based BS. Define ${h}_{m}$ as the channel gain from $U_m$ to the MEC server. Without loss of generality, $M$ users are sorted as $|h_{1}|\leq|h_{2}|\leq \cdots \leq |h_{M}|$. In NOMA offloading transmission, the SIC decoding order is assumed as the decreasing order of channel gains. It indicates that the MEC server first decodes the information transmitted by $U_M$ and then decodes the information of $U_{M-1}$, until $U_1$. Define $p_{m}$ as the offloading power of $U_m$, then the signal-to-interference-plus-noise-ratio (SINR) of the $U_m$ received at the MEC server can be written by
\begin{equation}
\Gamma_{m}^{off}=\frac{|h_{m}|^2p_{m}}{\sum \limits_{j=1}^{m-1}|h_{j}|^2p_{j}+\sigma^2}.
\end{equation}
where $\sigma^2$ represents zero-mean complex additive white Gaussian noise (AWGN) power. Denote the bandwidth of this system by $B$ Hz, thus the achievable data rate of the $U_m$ can be written by
\begin{equation}\label{R_m_n_off}
\begin{aligned}
R_{m}=&B\log_2(1+\Gamma_{m}^{off})
=B\log_2\left(\frac{\sum \limits_{i=1}^{m}|h_{i}|^2p_{i}+\sigma^2}{\sum \limits_{j=1}^{m-1}|h_{j}|^2p_{j}+\sigma^2}\right).
\end{aligned}
\end{equation}

In the MEC computation model, each user is enabled to offload a part of its task to the MEC server, where the offloaded task can be computed, and then the result can be downloaded from the MEC server to users. The task of $U_m$ can be described by two parameters $(L_m, C_m)$ where $L_m$ is the input number of bits for this task, and $C_m$ denotes the number of CPU cycles required to compute one bit of this task. In this paper, we assume that the MEC server based BS has sufficient computational capability and resource. Therefore, the remote execution time at MEC and downloading time can be negligible \cite{ZDing2018WCL,ZDing2018WCLDely}. 

\subsubsection{Time Consumption for Offloading }
In this phase, each user will offload part of its task to the MEC server for remote executions. According the achievable offloading data rate \eqref{R_m_n_off} of the $U_m$, the task offloading time from $U_m$ to the MEC server can be written by 
 \begin{equation}\label{T_m_off}
  \begin{aligned}
 T_{m}^{off}=\frac{\beta_{m}L_m}{R_{m}}.
  \end{aligned}
  \end{equation}
The offloading energy consumption for $U_m$ is
 \begin{equation}\label{E_m_off}
\begin{aligned}
E_m^{off}=T_{m}^{off}p_{m}.
\end{aligned}
\end{equation}
\subsubsection{Mobile Execution Time}
For $U_m$, partial task $\beta_mL_m$ is offloaded to the MEC server for remote computation, and the remaining task $(1-\beta_m)L_m$ is computed locally. Denote the CPU frequency at $U_m$ is $f_m^{loc}$ (in cycles per second). The local computation time of $U_m$ is given by
  \begin{equation}\label{T_m_loc}
 \begin{aligned}
 T_{m}^{loc}=\frac{(1-\beta_{m})L_mC_m}{f_m^{loc}}.
 \end{aligned}
 \end{equation}
Thus the energy consumption of computing at the mobile device $U_m$ can be written by
\begin{equation}\label{E_m_comp}
\begin{aligned}
E_{m}^{c}=\kappa_m(1- \beta_m) L_mC_m (f_m^{loc})^2
\end{aligned}
\end{equation}
where $\kappa_m$ denotes the effective capacitance coefficient for each CPU cycle of the local user $U_m$.

\subsection{Problem Formulation}
For each user, partial task will be offloaded to the MEC server for remote computation, and the remaining task will be computed locally. Each user's task will cost time to execute. Therefore, in this paper, we aim to minimize the maximum task completion time among users by optimizing task assignment (i.e., offloading task ratio $\beta_m$) and communication resource (i.e., the offloading transmit power $p_m$). The computing time at MEC server can be ignored compared to the offloading time due to the high computing capacity of MEC server. 
By considering the local computing time, the task completion time of $U_m$ can be written by
\begin{equation}\label{T_m}
T_{m}= \max \left\{T_{m}^{off},\ T_m^{loc}\right\}.
\end{equation}
Therefore, the task completion time minimization problem can mathematically formulated as
\begin{subequations}\label{Prob:T_min}
\begin{align}
 	\mathop {\min\ \max }\limits_{{\{ \bm{\beta},\bm{p}\}}}  &\left\{T_{m}^{off},T_m^{loc},\forall m\right\} \\
 	 \text{s.t.}  \quad \quad \quad &0 \leq \beta_{m}\leq 1, \forall m,\label{eq:beta_con}\\   
    &0\leq p_{m} \leq P_{\max}, \forall m,\label{eq:p_con}\\ 
 	&E_m^c+E_m^{off}\leq E_{\max}, \forall m,\label{eq:E_m_con} 
% 	&E_{MEC}^c+E_{MEC}^{down}\leq E_{\max}  \label{eq:E_MEC_con}
\end{align}
\end{subequations}
where $\bm{\beta}=[\beta_1,\cdots,\beta_M]^T$ and $\bm{p}=[p_1,\cdots,p_M]^T$. Constraint \eqref{eq:beta_con} specifies the range of the offloading task ratio.  Constraint \eqref{eq:p_con} describes the range of the offloading transmit power. Constraint \eqref{eq:E_m_con} guarantees that the energy consumed at each mobile user is limited to the maximum energy consumption $E_{\max}$. Since problem \eqref{Prob:T_min} is nonconvex, we propose the following transformation and solution to solve this problem.

  \section{The Optimal Solution for the Multiuser case}
 \subsection{Significant Observations}
  In this section, we consider the multiuser case and attain the minimum computation time for NOMA multiuser MEC networks. An bisection searching based iterative algorithm is proposed to find the global solution to problem \eqref{Prob:T_min}. Before solving this problem, we propose the following Lemma and proposition to simplify the optimization problem \eqref{Prob:T_min}.
 
 \begin{lem}\label{Lemma1}
 	In NOMA multiuser MEC networks, if $M$ users offload their signals to the BS equipped with MEC server within the same transmit time $T$,
 	\begin{equation}\label{Tm}
     T=T_{m}^{off}=T_{m'}^{off},\ \forall m\neq m'. 
 	\end{equation}
 	Then \eqref{Tm} can be equivalently transformed to 
 	\begin{equation}
 	\tilde{T}_m^{off}=\frac{\sum\limits_{i=1}^{m}\beta_{i}L_i}{B\log_2\left( \sum\limits_{i=1}^m|h_{i}|^2 p_{i} +\sigma^2\right)}, \quad \forall m.  
 	\end{equation}
 \end{lem}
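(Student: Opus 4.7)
The plan is to exploit the telescoping structure that the NOMA rate expression in \eqref{R_m_n_off} was already cast in. Starting from the hypothesis $T = T_m^{off}$ for every $m$, and using $T_m^{off}=\beta_m L_m/R_m$ from \eqref{T_m_off}, I would first rewrite the equal-time condition as $\beta_m L_m = T R_m$ for each $m$, and then sum these equalities over $i=1,\ldots,m$ to obtain
\begin{equation*}
\sum_{i=1}^{m}\beta_i L_i \;=\; T \sum_{i=1}^{m} R_i .
\end{equation*}

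The key observation, and the crux of the lemma, is that the sum $\sum_{i=1}^{m} R_i$ collapses. Introduce the cumulative signal term $S_i = \sum_{j=1}^{i}|h_j|^2 p_j$ with $S_0=0$, so that \eqref{R_m_n_off} reads $R_i = B\log_2\!\bigl((S_i+\sigma^2)/(S_{i-1}+\sigma^2)\bigr)$. Then
\begin{equation*}
\sum_{i=1}^{m} R_i \;=\; B\sum_{i=1}^{m}\log_2\!\frac{S_i+\sigma^2}{S_{i-1}+\sigma^2} \;=\; B\log_2\!\frac{S_m+\sigma^2}{S_0+\sigma^2} \;=\; B\log_2\!\frac{\sum_{i=1}^{m}|h_i|^2 p_i+\sigma^2}{\sigma^2},
\end{equation*}
where the middle step is a clean telescoping cancellation inside the logarithm. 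Substituting this back yields
\begin{equation*}
T \;=\; \frac{\sum_{i=1}^{m}\beta_i L_i}{B\log_2\!\bigl((\sum_{i=1}^{m}|h_i|^2 p_i+\sigma^2)/\sigma^2\bigr)},
\end{equation*}
which, after absorbing the constant $\sigma^2$ into the logarithm (as the statement does, up to what appears to be a minor normalization), is exactly $\tilde{T}_m^{off}$. The converse direction is immediate: if $\tilde{T}_m^{off}$ takes a common value $T$ for every $m$, one recovers $\beta_m L_m = T R_m$ by taking successive differences of the numerators and noting that the corresponding logarithms differ by precisely $R_m/B$, so that the individual offloading times $T_m^{off}$ must all equal $T$.

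I do not foresee any substantive obstacle here; the content of the lemma is really just the algebraic identity that the sum of SIC-stage NOMA rates equals the single-user rate of the composite received signal, combined with the common-time assumption. The only minor care needed is to justify the telescoping at the level of the logarithm (which requires $S_{i-1}+\sigma^2>0$, guaranteed by $\sigma^2>0$) and to verify both directions of the equivalence so that the transformation used later in the paper is truly lossless and can be plugged into problem \eqref{Prob:T_min} without altering its feasible set.
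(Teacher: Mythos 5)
Your proof is correct and follows essentially the same route as the paper's: both rest on the observation that equal ratios $\beta_m L_m/R_m$ force $T$ to equal the ratio of the sums, combined with the telescoping identity $\sum_{i=1}^{m}R_i=B\log_2\bigl((\sum_{i=1}^{m}|h_i|^2p_i+\sigma^2)/\sigma^2\bigr)$, and both check the converse by differencing. Your handling of the $\sigma^2$ normalization inside the logarithm is in fact slightly more careful than the paper's, which silently drops the noise term from the denominator of the first SIC stage.
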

 \begin{proof}
 	The proof is shown in Appendix \ref{Lemma1Proof}.
 \end{proof}
 According to Lemma \ref{Lemma1}, we propose the following proposition to further simplify problem \eqref{Prob:T_min}.
 \begin{prop} \label{T1=T2}
 	For any two users ($U_m$ and $U_{m'}$ $\forall m\neq m'$) in NOMA multiuser MEC networks, to minimize the maximum task completion time of different users, i.e.,
 	\begin{equation}\label{T_m^off}
 	\mathop {\min\ \max }\limits_{\{\beta_m,\beta_{m'},p_m,p_{m'}\}}\left\{T_{m}^{off},\ T_{m'}^{off}\right\},
 	\end{equation}
 	the optimal solution $\{\beta_m^*,\beta_{m'}^*,p_m^*,p_{m'}^*\}$ can be only obtained if and only if the offloading time of each user equals to each other $T_m^{off}=T_{m'}^{off}$.
 \end{prop}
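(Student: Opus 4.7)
The plan is a proof by contradiction on the necessity direction. Suppose the optimum $(\beta_m^*,\beta_{m'}^*,p_m^*,p_{m'}^*)$ of \eqref{T_m^off} satisfies $T_m^{off}\neq T_{m'}^{off}$, and without loss of generality take $T_m^{off}>T_{m'}^{off}$, so that $T_m^{off}$ is the binding term while $T_{m'}^{off}$ carries strictly positive slack. I would then exhibit a small feasible perturbation that strictly lowers $\max\{T_m^{off},T_{m'}^{off}\}$, contradicting the assumed optimality.

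Two candidate improving directions are available. The first uses the monotonicity of $T_m^{off}=\beta_m L_m/R_m$ in $\beta_m$: since $T_m^{off}>0$ forces $\beta_m^*>0$, shrinking $\beta_m^*$ by a small $\delta>0$ strictly decreases $T_m^{off}$ while leaving $T_{m'}^{off}$ unchanged, so for $\delta$ small enough the max drops. The second exploits the NOMA decoding structure in \eqref{R_m_n_off}: when $|h_m|>|h_{m'}|$ the power $p_{m'}^*$ sits in the denominator of $R_m$, so decreasing $p_{m'}^*$ slightly raises $R_m$ (lowering $T_m^{off}$) while only slightly raising $T_{m'}^{off}$; by joint continuity of both offloading times in $(\bm{\beta},\bm{p})$, the perturbed $T_{m'}^{off}$ remains below the perturbed $T_m^{off}$ for sufficiently small changes. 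The symmetric case $|h_m|<|h_{m'}|$ is handled by the analogous perturbation on $p_m^*$, or else by the $\beta_m$-shrink above, which always applies.

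The main obstacle I anticipate is verifying feasibility of the perturbation against all three constraint families \eqref{eq:beta_con}--\eqref{eq:E_m_con} simultaneously, since the energy budget couples $\beta_m$ and $p_m$ nontrivially through \eqref{E_m_off} and \eqref{E_m_comp}. I would resolve this by a brief case split on which of $\beta_m^*\in\{0,1\}$, $p_m^*\in\{0,P_{\max}\}$, and the per-user energy cap are active at the candidate optimum, showing that the strict inequality $T_m^{off}>T_{m'}^{off}$ always leaves at least one of the two candidate directions admissible. Once necessity is established, the reverse implication follows because, restricted to the equal-time regime, Lemma \ref{Lemma1} collapses the pair of offloading-time equations into a single closed-form expression for the common time $T$, so any feasible point satisfying $T_m^{off}=T_{m'}^{off}$ together with the first-order conditions for that reduced expression is globally optimal.
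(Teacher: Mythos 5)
Your proposal is correct and follows essentially the same route as the paper: Appendix~\ref{T_1_off=T_m_off} also argues by contradiction, perturbing the power of the user whose signal interferes with the other (there, increasing $p_m^*$ for $m<m'$ so that $T_m^{off}$ falls while $T_{m'}^{off}$ rises through the SIC interference term, then invoking continuity to find a strictly better equal-time point). Your added $\beta_m$-shrink direction and the feasibility case split go beyond the paper, which does not check the power or energy constraints for its perturbation at all; otherwise the two arguments coincide.
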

 \begin{proof}
 	The proof is shown in Appendix \ref{T_1_off=T_m_off}.
 \end{proof}
\subsection{BSS Iterative Algorithm}
 Based on Lamma \ref{Lemma1} and Proposition \ref{T1=T2}, we have $\tilde{T}_m^{off}=T_1^{off}=T_2^{off}$. The energy consumed at $U_m$ can be written as
 \begin{equation}
 E_m=  \kappa_m \beta_m L_m (f_m^{loc})^2+\tilde{T}_m^{off}p_m, \forall m.
 \end{equation}
Thus, the task completion minimization problem for NOMA multiuser MEC can be reformulated as:
\begin{subequations}\label{Prob:T_min_2}
	\begin{align}
	\mathop {\min \quad \max }\limits_{ \{\bm{\beta},\bm{p}\} }  &\left\{\tilde{T}_m^{off},T_m^{loc},\forall m\right\} \\
	\text{s.t.}  \quad \quad \quad &\beta_{m}\in [0,1], \forall m, \label{eq:beta_con_2}\\   
	&0\leq p_{m}\leq P_{\max},\forall m,\label{eq:p_con_2}\\ 
	&E_m\leq E_{\max},\forall m.\label{eq:E_1_con}
%	&E_2^c+E_2^{off}\leq E_{\max}, \label{eq:E_2_con}  
	\end{align}
\end{subequations}
 This problem is  non-convex due to the nonconvexity of $\tilde{T}_m^{off}$ with respect to $\{\beta_m,p_m\}$. Thus it is challenging to obtain its global optimum within polynomial time. However, after investigating properties of this problem, we first propose Proposition \ref{quasi} to demonstrate the quasi-convexity of the objective function in \eqref{Prob:T_min_2}, which can be solved by a series of convex feasibility problems. 
\begin{prop}\label{quasi}
	Given by the expressions \eqref{T_m_loc} and \eqref{T_m^off}, the objective function in problem \eqref{Prob:T_min_2} is strictly quasi-convex. 
\end{prop}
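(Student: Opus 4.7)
The plan is to use the sublevel-set characterization of quasi-convexity: a function is (strictly) quasi-convex on its domain iff each of its (strict) sublevel sets is (strictly) convex, and the pointwise maximum of quasi-convex functions is quasi-convex because the $t$-sublevel set of a maximum is the intersection of the component $t$-sublevel sets. This reduces Proposition \ref{quasi} to verifying quasi-convexity of each of the $2M$ component functions $\{\tilde{T}_m^{off}\}$ and $\{T_m^{loc}\}$ separately, and then invoking the intersection property.

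For the local pieces, $T_m^{loc}=(1-\beta_m)L_mC_m/f_m^{loc}$ is affine in $\beta_m$ and independent of every other decision variable, hence convex and thus quasi-convex on the feasible region defined by \eqref{eq:beta_con_2}--\eqref{eq:E_1_con}. For the offloading pieces, I would attack them through the level-set formulation: for any $t>0$, the inequality $\tilde{T}_m^{off}(\bm{\beta},\bm{p})\le t$ is equivalent to
\begin{equation*}
g_{m,t}(\bm{\beta},\bm{p}):=\sum_{i=1}^{m}\beta_i L_i \;-\; tB\log_2\!\Bigl(\sum_{i=1}^{m}|h_i|^2 p_i+\sigma^2\Bigr)\;\le\;0.
\end{equation*}
The first term is linear in $\bm{\beta}$, while the second is $-t$ times a strictly concave function of $\bm{p}$, because $\log_2(\cdot)$ is strictly concave and its argument is a strictly positive affine function of $\bm{p}$ under \eqref{eq:p_con_2}. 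Hence $g_{m,t}$ is strictly convex jointly in $(\bm{\beta},\bm{p})$, so its zero-sublevel set is a (strictly) convex set, which establishes strict quasi-convexity of each $\tilde{T}_m^{off}$.

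Putting the pieces together, the $t$-sublevel set of $\Phi(\bm{\beta},\bm{p}):=\max_m\{\tilde{T}_m^{off},T_m^{loc}\}$ equals $\bigcap_{m}\{\tilde{T}_m^{off}\le t\}\cap\{T_m^{loc}\le t\}$, which is an intersection of convex sets and therefore convex; this is precisely the ingredient needed to justify the BSS algorithm in the next subsection. The main subtlety I expect to wrestle with is the adverb \emph{strictly}: the pointwise maximum of strictly quasi-convex functions need not itself be strictly quasi-convex because two curved pieces may meet at a flat kink. I would handle this by a dichotomy on any line segment between two distinct feasible points: either the segment changes some $p_i$, in which case the strict concavity of $\log_2$ forces strict convexity of the corresponding $g_{m,t}$ along the segment, or the segment keeps $\bm{p}$ fixed, in which case it must change some $\beta_m$ and the corresponding affine $T_m^{loc}$ is strictly monotone along the segment. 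Making this dichotomy watertight is the crux; everything else is bookkeeping.
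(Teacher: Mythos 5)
Your proposal follows essentially the same route as the paper's own proof: both use the sublevel-set characterization of quasi-convexity, note that $T_m^{loc}$ is affine, show that the set $\bigl\{\sum_{i}\beta_iL_i - tB\log_2(\sum_i|h_i|^2p_i+\sigma^2)\le 0\bigr\}$ is convex (the paper via an explicit Hessian/positive-semidefiniteness computation, you via concavity of $\log_2$ composed with an affine map), and conclude by intersecting convex sets. The only divergence is your closing dichotomy for the adverb \emph{strictly}, which is no more and no less complete than the paper's corresponding claim (its quadratic form also vanishes along directions that move only $\bm{\beta}$), so the two arguments are essentially identical.
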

\begin{proof}
	The proof is shown in Appendix \ref{QuasiProof}.
\end{proof} 
To obtain the optimal solution, problem \eqref{Prob:T_min_2} can be equivalently transformed to the following problem:
\begin{subequations}\label{Prob:T_transfer}
	\begin{align}
	\mathop {\min}\limits_{\{\bm{\beta},\bm{p},\alpha_T \}} \quad& \alpha_T \\
\text{s.t.}  \quad &\frac{\sum\limits_{i=1}^m\beta_{i} L_{i}}{\log_2\left(1+ \sum\limits_{i=1}^M|h_{i}|^2 p_{i} \right)}\leq \alpha_T,\
\forall m \label{R_alphaT_M}\\
&\frac{(1-\beta_m)L_mC_m}{f_m^{loc}}\leq \alpha_T, \
\forall m \label{bm_alphaT} \\
& 0\leq\beta_{m}\leq 1,\  \forall m \label{b_0_1_M}\\
& 0\leq p_m\leq P_{\max},\  \forall m \label{p_0_max_M}\\
& \kappa_m(1- \beta_m) L_mC_m (f_m^{loc})^2+\alpha_Tp_m\leq E_{\max}, \ \forall m.\label{EE_m}\\\nonumber
	\end{align}
\end{subequations}
Note that the inequality constraint set is not convex set in $\{\bm{\beta} ,\bm{p} \}$ due to the quasi-convexity of \eqref{R_alphaT_M} and non-convexity of \eqref{EE_m}. However, problem \eqref{Prob:T_transfer} becomes a feasibility problem if we fix $\alpha_T$. Thus we can solve the above problem by solving a serious of convex feasibility subproblems. For a given $\alpha_T$, the feasibility problem can be formulated as:
\begin{subequations}\label{Prob:T_feasible}
	\begin{align}
	\text{ {find }}\quad&\{\bm{\beta} ,\bm{p} \} \\
	\text{s.t.} \quad&\eqref{R_alphaT_M}-\eqref{EE_m}.
	\end{align}
\end{subequations}
Note that the convex constraint set can be denoted as 
\begin{equation}
	C_{\alpha_T}=\{\{\bm{\beta} ,\bm{p} \}|\eqref{R_alphaT_M}-\eqref{EE_m}\}, \forall \alpha_T.
\end{equation}
Define the optimal solution of this feasibility problem is $\alpha_T^*$. Bisection search method can be utilized to find $\alpha_T^*$ \cite{YXuTSP2017}. Therefore, Algorithm \ref{Al1} is proposed to find the minimum task completion time of user tasks. In Algorithm \ref{Al1}, we first initialize $\alpha_T$ by its lower bound and upper bound. For given $\alpha_T$, we say that problem \eqref{Prob:T_feasible} is feasible $C_{\alpha_T}\neq\phi$, and we have $\alpha_T\geq \alpha_T^*$. Problem \eqref{Prob:T_feasible} is infeasible $C_{\alpha_T}=\phi$, and we have  $\alpha_T<\alpha_T^*$. Algorithm \ref{Al1} converges to the unique global optimal solution to problem due to its strictly quasi-convexity \cite{ChongChiConvex}.

	\begin{algorithm}[!t] \small
	\caption{~BSS-based algorithm for problem \eqref{Prob:T_transfer} }\label{Alg1}
	\begin{algorithmic}[1] 
		\STATE {{\bf Initialization:} Set $\alpha_T^{\min}=0$, $\alpha_T^{\max}=\max\{\frac{L_1C_1}{f_1^{loc}},\cdots,\frac{L_MC_M}{f_M^{loc}}\}$ and the accuracy $\epsilon=10^{-4}$.}
		%\STATE {\bf repeat}
		\WHILE {$\alpha_T^{\max}-\alpha_T^{\min}> \epsilon$}
		\STATE {Set $\alpha_T=(\alpha_T^{\min}+\alpha_T^{\max})/2$.  }
		\STATE {Solve the convex feasibility problem \eqref{Prob:T_feasible} and find $C_{\alpha_T}$. }
		\IF{$C_{\alpha_T}\neq \phi$}
		\STATE Update $\alpha_T^{\max}=\alpha_T$.
		\ELSE
		\STATE Update $\alpha_T^{\min}=\alpha_T$.
		\ENDIF
		\ENDWHILE
		\STATE {{\bf Output:} $\alpha_T^*=\frac{\alpha_T^{\min}+\alpha_T^{\max}}{2}$, $\bm{\beta}^*$ and $\bm{p}^*$.}
	\end{algorithmic}\label{Al1}
\end{algorithm}

The computational complexity of Alg. \ref{Alg1} mainly comes from the BSS to find the optimal $\alpha_T$. For given accuracy $\epsilon$, $\alpha_T^{\min}$ and $\alpha_T^{\max}$, the computational complexity of Alg. \ref{Alg1} is given by $\mathcal{O}\left(\log_2\left(\frac{\alpha_T^{\max} - \alpha_T^{\min}}{\epsilon}\right)\right)$.

\section{Closed-Form Optimal Solution Derivation for the Two-user case} 
To further reduce the complexity of the proposed iterative algorithm, in this section, we derive the closed-form solution for two-user case based on the insights and propositions obtained from problem  \eqref{Prob:T_min_2}. To further simplify the problem, the following proposition can be obtained.
 
\begin{prop} \label{T_loc=T_off}
	For each user $U_m$ with its offloading task ratio $\beta_m$,  to minimize the maximum of its offloading time $T_m^{off}$ and local computing time $T_m^{loc}$, i.e.,
	\begin{equation}\label{T_m^loc}
	\mathop {\min\ \max }\limits_{\{\beta_m,p_m\}}\left\{T_{m}^{off},\ T_m^{loc}\right\}\  \forall m,
	\end{equation}
	the optimal solution $\{\beta_m^*,p_m^*\}$ can be only obtained if and only if its offloading time equals to its local computing time, i.e., $T_m^{off}=T_m^{loc}$. 
\end{prop}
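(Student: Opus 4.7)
The plan is to argue by contradiction and exhibit a strictly improving feasible perturbation whenever $T_m^{off\ast}\neq T_m^{loc\ast}$. Sufficiency of equality then follows from the same monotonicity: from a point where the two times coincide, any perturbation strictly raises at least one of them and therefore cannot strictly decrease their maximum. So I focus on the necessary direction.

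The two violating cases are symmetric; consider first $T_m^{off\ast}>T_m^{loc\ast}$. I would perturb $\beta_m'=\beta_m^\ast-\delta$ for small $\delta>0$, keeping $p_m'=p_m^\ast$ so that $R_m$ is unchanged. From \eqref{T_m_off}--\eqref{T_m_loc}, the offloading time decreases by $\delta L_m/R_m^\ast$ and the local time grows by $\delta L_m C_m/f_m^{loc}$; continuity of the strict inequality $T_m^{off\ast}>T_m^{loc\ast}$ preserves the ordering for sufficiently small $\delta$, so the new objective $T_m^{off'}<T_m^{off\ast}$ strictly improves. The symmetric case $T_m^{off\ast}<T_m^{loc\ast}$ uses $\beta_m'=\beta_m^\ast+\delta$. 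The box constraints are automatically respected because $\beta_m^\ast$ must lie strictly inside $(0,1)$ (a boundary value forces $T_m^{off\ast}=0$ or $T_m^{loc\ast}=0$, contradicting the assumed strict inequality) and $p_m$ is held fixed.

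The main obstacle is the energy constraint $E_m\le E_{\max}$. The first-order change in Case~1 is $\Delta E_m=\delta L_m\bigl(\kappa_m C_m(f_m^{loc})^2-p_m^\ast/R_m^\ast\bigr)$; when this is non-positive, feasibility is immediate, and when positive with $E_m^\ast<E_{\max}$, a small $\delta$ is absorbed by slack. The delicate subcase is an active energy bound with local computation more expensive per bit than offloading. There I would use a joint perturbation $p_m'=p_m^\ast-\mu\delta$: because $\partial E_m/\partial p_m>0$, which follows from the inequality $\ln(1+x)>x/(1+x)$ applied to $R_m$ in \eqref{R_m_n_off}, one can select $\mu>0$ so that $\Delta E_m$ vanishes to first order while $\Delta T_m^{off}$ remains strictly negative. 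A short computation reduces the simultaneous feasibility to the condition $\kappa_m C_m(f_m^{loc})^2\,\partial R_m/\partial p_m\neq 1$, i.e., linear independence of $\nabla E_m$ and $\nabla T_m^{off}$ at the optimum; verifying this nondegeneracy is the crux of the argument. Outside this measure-zero locus the improving perturbation exists, contradicts the optimality of $(\beta_m^\ast,p_m^\ast)$, and forces $T_m^{off\ast}=T_m^{loc\ast}$.
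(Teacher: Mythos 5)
Your skeleton --- contradiction via a small perturbation of $\beta_m$ that strictly lowers the larger of the two times --- is exactly the paper's route in Appendix~\ref{T_m_loc=T_m_off}: the paper likewise assumes $T_m^{off}\neq T_m^{loc}$ at the optimum, asserts the existence of an intermediate $\hat\beta_m$ with $\hat T_m^{off}=\hat T_m^{loc}$ strictly between the two original values, and only gestures at feasibility by noting the monotonicity of the two energy terms in $\beta_m$. You go further: the first-order energy change $\delta L_m\bigl(\kappa_m C_m(f_m^{loc})^2-p_m^*/R_m^*\bigr)$ is computed correctly, the sign of $\partial E_m^{off}/\partial p_m$ via $\ln(1+x)>x/(1+x)$ is right, and you correctly isolate the only problematic subcase (energy constraint active and local computation more energy-expensive per bit). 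Up to that point your argument is more careful than the paper's own.

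The gap is in how you close that subcase. Forcing $\Delta E_m=0$ to first order pins $\mu=\bigl(\kappa_m C_m(f_m^{loc})^2-p_m^*/R_m^*\bigr)(R_m^*)^2/\bigl(\beta_m^*(R_m^*-p_m^*\,\partial R_m/\partial p_m)\bigr)$, and substituting this into $\Delta T_m^{off}$ shows the joint perturbation is strictly improving precisely when $\kappa_m C_m(f_m^{loc})^2\,\partial R_m/\partial p_m<1$ at the optimum --- a strict one-sided inequality, not the non-equality you state. The set where it fails is an open parameter region, not a measure-zero locus, and on that region your direction strictly \emph{increases} $T_m^{off}$, so no contradiction is produced. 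The natural rescue (perturb the other way, increasing both $\beta_m$ and $p_m$ so that the energy freed by offloading more pays for extra transmit power) is blocked exactly when $p_m^*=P_{\max}$, which is the regime in which one expects the energy constraint to bind; so the necessity claim is not established in the subcase you yourself identified as the crux, and it is not clear it holds there without an additional constraint qualification. A smaller point: your one-line sufficiency argument is also not right --- at a non-optimal point with $T_m^{off}=T_m^{loc}$ and $p_m<P_{\max}$, raising $p_m$ lowers $T_m^{off}$ without raising $T_m^{loc}$, and a subsequent increase of $\beta_m$ then strictly lowers the maximum, so equality alone does not imply optimality (the paper, too, only argues the necessity direction).
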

\begin{proof}
	Proof by contradiction is shown in Appendix \ref{T_m_loc=T_m_off}. 
\end{proof}
Based on Proposition \ref{T_loc=T_off}, we conclude that the optimal solution to problem \eqref{Prob:T_min_2} can be obtained when $T_m^{off}=T_m^{loc}$ for each user. According to Proposition \ref{T1=T2}, the optimal solution can only be obtained when the offloading time equals to each other. Considering two-user case $|h_1|\leq|h_2|$, problem \eqref{Prob:T_min_2} can be rewritten by
\begin{subequations}\label{Prob:T_min_2_2}
	\begin{align}
	\mathop {\min }\limits_{ \{\beta _{1},\beta _{2},p_{1}, p_{2} \}}  &\frac{\beta_1L_1+\beta_2L_2}{\log_2(1+|h_1|^2p_1+|h_2|^2p_2)} \\
	\text{s.t.}  \quad \quad \quad &\beta_{1}\in [0,1], \beta_{2}\in [0,1], \\   
	&0\leq p_{1}\leq P_{\max},0\leq p_{2}\leq P_{\max}, \label{p_range}\\ \nonumber
	&\kappa_1(1- \beta_1) L_1C_1 (f_1^{loc})^2\\
    &+\frac{\beta_1L_1+\beta_2L_2}{\log_2(1+|h_1|^2p_1+|h_2|^2p_2)}p_1 \leq E_{\max}, \label{eq:E1}\\\nonumber
    &\kappa_2(1- \beta_2) L_2C_2 (f_2^{loc})^2 \\
	&+\frac{\beta_1L_1+\beta_2L_2}{\log_2(1+|h_1|^2p_1+|h_2|^2p_2)}p_2 \leq E_{\max}\label{eq:E2}\\ 
	&\frac{(1-\beta_1)L_1C_1}{f_1^{loc}}=\frac{\beta_1L_1+\beta_2L_2}{\log_2(1+|h_1|^2p_1+|h_2|^2p_2)} \label{eq1}\\
	&\frac{(1-\beta_2)L_2C_2}{f_2^{loc}}=\frac{\beta_1L_1+\beta_2L_2}{\log_2(1+|h_1|^2p_1+|h_2|^2p_2)}\label{eq2}\\
    &\frac{\beta_1L_1}{\log_2(1+|h_1|^2p_1)}=\frac{\beta_1L_1+\beta_2L_2}{\log_2(1+|h_1|^2p_1+|h_2|^2p_2)}\label{eq3}.\\\nonumber
	\end{align}
\end{subequations}
In problem \eqref{Prob:T_min_2_2}, the objective function in \eqref{Prob:T_min_2_2} is quasiconvex based on Proposition \ref{quasi}. However, the constraint \eqref{eq:E1} and \eqref{eq:E2} are not convex set with respective to $\{\beta _{1},\beta _{2},p_{1}, p_{2}\}$. To simplify this problem, we first deal with equality constraints \eqref{eq1}-\eqref{eq3}. To solve the above problem and obtain the global optimum, we first equally transform this problem to an equivalent convex form via equality constraints. By using the equation \eqref{eq3}, we can replace the right sides of \eqref{eq1} and \eqref{eq2} with the left side of \eqref{eq3}. Then we have
 \begin{subequations}\label{equalities}
 		\begin{align}
	&(1-\beta_1)L_1C_1=\frac{\beta_1L_1}{\log_2(1+|h_1|^2p_1)}f_1^{loc} \label{eq:beta11}\\
&(1-\beta_1)L_1C_1/f_1^{loc}=(1-\beta_2)L_2C_2/f_2^{loc}\\
&(1-\beta_1)L_1C_1/f_1^{loc}=\frac{\beta_1L_1+\beta_2L_2}{\log_2(1+|h_1|^2p_1+|h_2|^2p_2)}. 
	\end{align}
\end{subequations}
After a series of calculations, the objective function in \eqref{Prob:T_min_2_2} can be rewritten by
\begin{equation}
\frac{L_1+L_2}{\frac{f_1^{loc}}{C_1}+\frac{f_2^{loc}}{C_2}+B\log_2(1+|h_1|^2p_1+|h_2|^2p_2)}.
\end{equation} 
The transformation can be found in Appendix \ref{ObjTransfer}.
Therefore, problem \eqref{Prob:T_min_2_2} can be rewritten by 
\begin{subequations}\label{Prob:p1,p2}
	\begin{align}
	\mathop {\min }\limits_{ \{p_{1},p_{2}\}} \quad &\frac{L_1+L_2}{\frac{f_1^{loc}}{C_1}+\frac{f_2^{loc}}{C_2}+B\log_2(1+|h_1|^2p_1+|h_2|^2p_2)}\\
	\text{s.t.}  \quad&\eqref{p_range}-\eqref{eq:E2}.    
	\end{align}
\end{subequations}
\begin{prop} Problem \eqref{Prob:p1,p2} is convex problem.
\end{prop}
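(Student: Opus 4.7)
The plan is to show that both the objective and the constraints of \eqref{Prob:p1,p2} are convex in $(p_1,p_2)$, after $\beta_1,\beta_2$ have been eliminated via the equalities \eqref{equalities}. Writing $A := \frac{f_1^{loc}}{C_1} + \frac{f_2^{loc}}{C_2} > 0$ and $g(p_1,p_2) := A + B\log_2(1+|h_1|^2 p_1 + |h_2|^2 p_2)$, I first observe that $g$ is concave, since it is the composition of the concave $\log_2$ with an affine map, shifted by a constant, and that $g$ is strictly positive on the feasible region (because $A>0$ and $\log_2(1+\cdot)\ge 0$). The objective $\frac{L_1+L_2}{g(p_1,p_2)}$ is therefore a positive constant times the reciprocal of a positive concave function, and so is convex by standard rules.

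Next, the box constraints \eqref{p_range} are affine and therefore define convex sets. The substantive work is on the energy constraints \eqref{eq:E1} and \eqref{eq:E2}. By Propositions \ref{T1=T2} and \ref{T_loc=T_off}, $T_m^{loc} = T_m^{off} = T$, which yields $(1-\beta_m)L_m C_m = T f_m^{loc}$ for $m=1,2$. Substituting this into \eqref{eq:E1} collapses the computation-energy term to $\kappa_1 T (f_1^{loc})^3$ and the transmit-energy term to $Tp_1$. Using $T = (L_1+L_2)/g(p_1,p_2)$ and multiplying through by the positive quantity $g(p_1,p_2)$, constraint \eqref{eq:E1} is equivalent to
\begin{equation*}
E_{\max}\, g(p_1,p_2) - (L_1+L_2)\bigl[\kappa_1 (f_1^{loc})^3 + p_1\bigr] \ge 0,
\end{equation*}
and an analogous rewriting applies to \eqref{eq:E2}. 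The left-hand side is the sum of the concave function $E_{\max}\, g$ and an affine function of $(p_1,p_2)$, hence is concave; its superlevel set at level $0$ is therefore convex. Combining with the convex objective and the convex box constraints yields the convexity of \eqref{Prob:p1,p2}.

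The only step that needs some care is the elimination of $\beta_1,\beta_2$ and the clearing of the denominator $g(p_1,p_2)$: once one verifies that $g$ is strictly positive throughout the feasible region and that the substituted $\beta_m$ stay in $[0,1]$ whenever the energy constraints are satisfied, the direction of each inequality is preserved and the remaining manipulations are routine applications of convexity-preservation rules. I do not expect any further obstacle, since no KKT analysis or transformation beyond these algebraic rewrites is needed to establish convexity.
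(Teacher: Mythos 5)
Your proof is correct and takes essentially the same route as the paper: both treat the objective as $(L_1+L_2)/(b_1+R)$ with $R=B\log_2(1+|h_1|^2p_1+|h_2|^2p_2)$ concave and the denominator positive, and both clear that positive denominator in the energy constraints \eqref{eq:E1}--\eqref{eq:E2} to obtain an equivalent constraint whose left-hand side is affine plus a multiple of $R$. The only cosmetic difference is that the paper certifies convexity of the cleared constraint by explicitly computing a Hessian and checking positive semi-definiteness, whereas you invoke the composition rules (sum of concave and affine, superlevel set); the substance is identical.
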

\begin{proof}
 The convexity proof is shown in Appendix \ref{ConvexityProof}.
\end{proof}
 According to equations \eqref{eq1}-\eqref{eq3}, once the optimal $p_1^*$ and $p_2^*$ are obtained, the optimal $\beta_1^*$ and $\beta_2^*$ can be calculated by the following expressions.
\begin{subequations}
	\begin{align}
	&\beta_1^*=\frac{\log_2(1+|h_1|^2p_1^*)}{\frac{f_1^{loc}}{C_1}+\log_2(1+|h_1|^2p_1^*)}\\
	&\beta_2^*=1-\frac{(1-\beta_1^*)L_1C_1f_2^{loc}}{L_2C_2f_1^{loc}}\\ \nonumber
	\end{align}
\end{subequations}

In the following, we focus on deriving the optimal closed-form expressions of $p_1^*$ and $p_2^*$. Since problem \eqref{Prob:p1,p2} is convex and satisfy Slater's condition, the KKT conditions can be exploited to derive the closed-form optimal solution. The optimal solution can be concluded obtained by following four cases.

\emph{\underline{Case 1}}: When \begin{equation}
\left\{
\begin{aligned}
&  P_{1,w}(p_2^*=P_{\max}) \geq P_{\max} \quad  & \\
&  P_{2,w}(p_1^*=P_{\max}) \geq P_{\max} &
\end{aligned}
\right.
\end{equation}
Define
\begin{equation}
P_{1,w}(p_2^*)= -\frac{\mathcal{W}_0\left(-\frac{B_1\log(2)}{|h_1|^2}2^{\left(-\frac{B_1}{|h_1|^2}+A_1\right)}\right)}{B_1\log(2)}-\frac{1+|h_2|^2p_2^*}{|h_1|^2},\label{p1range_p1}
\end{equation} 
where
$A_1=\frac{\kappa_1a_1(f_1^{loc})^3}{E_{\max}B}-\frac{b_1}{B}$, $B_1=\frac{a_1}{E_{\max}B}$, $a_1=L_1+L_2$ , $b_1=\frac{f_1^{loc}}{C_1}+\frac{f_2^{loc}}{C_2}$ and where $\mathcal{W}_0(\cdot)$ is Lambert $\mathcal{W}$ function, which is single value function.

  \begin{equation}
 P_{2,w}(p_1^*)= -\frac{\mathcal{W}_0\left(-\frac{B_2\log(2)}{|h_2|^2}2^{\left(-\frac{B_2}{|h_2|^2}+A_2\right)}\right)}{B_2\log(2)}-\frac{1+|h_1|^2p_1^*}{|h_2|^2}\label{p2range_p1}
 \end{equation} 
 where $A_2=\frac{\kappa_2a_1(f_2^{loc})^3}{E_{\max}B}-\frac{b_1}{B}$ and $B_2=\frac{a_1}{E_{\max}B}$. 
Thus we have
\begin{equation}\label{Case1}
\left\{
\begin{aligned}
&p_1^*= P_{\max} \quad  &\\
&p_2^*=P_{\max}. \quad   &
\end{aligned}
\right.
\end{equation}

%$P_{1,w}(p_2^*)$ and $P_{2,w}(p_1^*)$ are defined in \eqref{p1range_p1} and \eqref{p1range_p1}.
\emph{\underline{Case 2}}: When \begin{equation}
\left\{
\begin{aligned}
&  P_{1,w}(p_2^*=P_{\max}) \geq P_{\max}\quad  & \\
& P_{2,w}(p_1^*=P_{\max}) \leq  P_{\max}, &
\end{aligned}
\right.
\end{equation}
thus we have
\begin{equation}\label{Case2}
\left\{
\begin{aligned}
&p_1^*= P_{\max} \quad  &\\
&p_2^*=P_{2,w}(p_1^*=P_{\max}). \quad   &
\end{aligned}
\right.
\end{equation}

\emph{\underline{Case 3}}: When \begin{equation}
\left\{
\begin{aligned}
& P_{1,w}(p_2^*=P_{\max})\leq P_{\max} \quad  & \\
& P_{2,w}(p_1^*=P_{1,w}) \geq P_{\max}, &
\end{aligned}
\right.
\end{equation}
thus we have
\begin{equation}\label{Case3}
\left\{
\begin{aligned}
&p_1^*= P_{\max} \quad  &\\
&p_2^*=P_{2,w}(p_1^*=P_{\max}). \quad   &
\end{aligned}
\right.
\end{equation}

\emph{\underline{Case 4}}: When \begin{equation}
\left\{
\begin{aligned}
& P_{1,w}(p_2^*)\leq P_{\max} \quad  & \\
& P_{2,w}(p_1^*)\leq P_{\max}, &
\end{aligned}
\right.
\end{equation}
thus we have
\begin{equation}\label{Case44}
\left\{
\begin{aligned}
p_1^*=&\kappa_2(f_2^{loc})^3-\kappa_1(f_1^{loc})^3+p_2^*\\
p_2^*=&-\frac{\mathcal{W}_0\left(-\frac{B_2\log(2)}{(|h_1|^2+|h_2|^2)}2^{\left(-\frac{B_2}{(|h_1|^2+|h_2|^2)}+A_2\right)}\right)}{B_2\log(2)}\\
&-\frac{1+|h_1|^2\left(\kappa_2(f_2^{loc})^3-\kappa_1(f_1^{loc})^3\right)}{(|h_1|^2+|h_2|^2)}. 
\end{aligned}
\right.
\end{equation}
where  $A_2=\frac{\kappa_2a_1(f_2^{loc})^3}{E_{\max}B}-\frac{b_1}{B}$ and $B_2=\frac{a_1}{E_{\max}B}$. 
\begin{proof}
The derivation is provided in Appendix \ref{OptimalDerivation2}.
\end{proof}
From the above solution, we can directly obtain the optimal solution for two-user case NOMA MEC and the minimum task completion time can be obtained. The complexity is largely reduced compared with the proposed Algorithm \ref{Alg1}. 

\section{Simulation Results}
In this section, the performance of our proposed resource allocation schemes is evaluated by simulation results. In the simulation results, our proposed BSS algorithm and the optimal analysed solution are compared with three benchmark schemes, i.e., orthogonal frequency-division multiple access (OFDMA) based partial scheme, NOMA based full offloading scheme and fully local computing scheme. In the simulation setting up, the users are randomly distributed in a single cell with the radius of 500 m. The channel gain from the $m$-th user to the BS is denoted by $h_{m}=g_{m} \cdot {(1+d_{m}^\alpha)^{-\frac{1}{2}}}$ in which $g_{m}$ is a Rayleigh fading channel coefficient, and $d_m$ is the distance from this user to the BS. The path loss factor $\alpha$ is 3.76. The AWGN power is $\sigma^2=BN_0$ where the AWGN spectral density is $N_0= -174$ dBm/Hz. For the computational resource at each user, we set $C_m=10^3, \forall m$ cycles/bit.
 
\begin{figure}[t]
	\centering
	\includegraphics[width=0.6\linewidth]{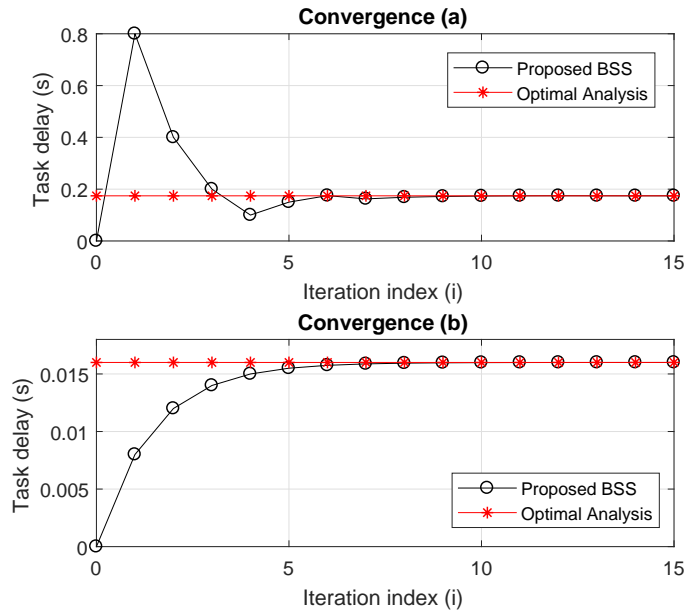}\\
	\caption{The convergence and optimality of Algorithm 1.} \label{Fig1}
\end{figure}

In Fig. \ref{Fig1}, we provide the convergence and optimality of the proposed BSS algorithm. In figure 1 (a), we set the bandwidth to $B=1$ MHz. The length of computation tasks for each user is set to $L_1=L_2=1.6\times10^6$ bits. The maximum energy consumption for each user is $E_{\max}=0.2$ Joule, and the maximum power for each user is  $P_{\max}=0.01$ W, and the effective capacitance coefficient for each CUP cycle of the local users are $\kappa_m=10^{-28}\times[10,1]$. In Fig. 1 (b), we set the bandwidth to $B=1$ Hz, and the length of computation tasks for each user is set to $L_1=L_2=1.6\times10^4$ bits. Since the offloading data rate is small due to small bandwidth, most the tasks are computed by local users. Therefore, the minimum  completion time in BSS algorithm keeps increasing by each iteration until its optimal. From this figure, we can see that our proposed BSS algorithm converges within 10 iterations. The convergence point is perfectly matched with the optimal analytical solution, which is derived by Lagrangian approach.

%\begin{figure}[t]
%	\centering
%	\includegraphics[width=0.6\linewidth]{Fig2_RateConvergence}\\
%	\caption{The offloading data rate convergence with different maximum powers.} \label{Fig2}
%\end{figure}
%
%Figure \ref{Fig2} presents the offloading data rate convergence of the proposed scheme with different maximum powers, where the task length for each user is $L_1=L_2=1.6\times10^6$ bits and $B=1$ MHz. From this figure, we can see that data rate of each user fluctuates at first and then converges within 10 iterations. It also shows that the scheme with larger $P_{\max}$ provides higher data rate than the scheme with lower $P_{\max}$. $U_2$ can always achieve higher data rate than $U_1$. This is because that the BS decodes signals of $U_1$ first according to the SIC decoding order. When the BS decodes the signals from $U_2$, the interference from $U_1$ can be removed.

\begin{figure}[th]
	\centering
	\includegraphics[width=0.6\linewidth]{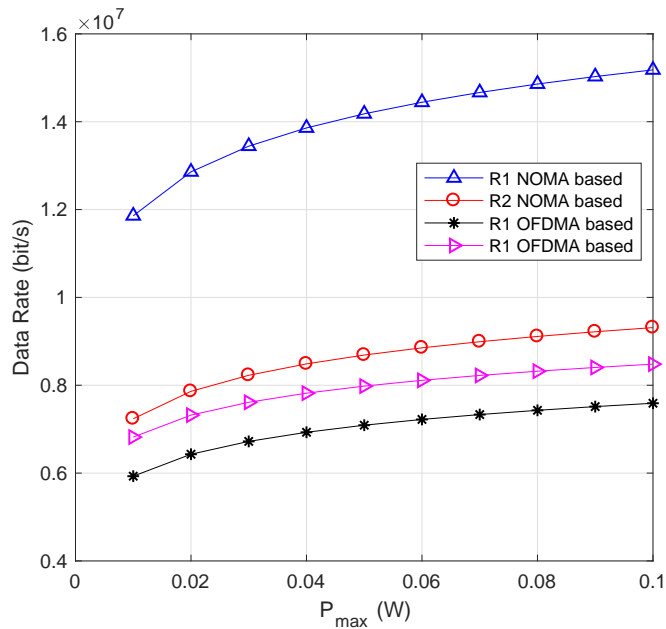}\\
	\caption{The offloading data rate comparison of NOMA based and OFDMA based schemes.} \label{Fig3}
\end{figure}

Figure \ref{Fig3} illustrates the data rate comparison of NOMA offloading and OFDMA offloading schemes versus the maximum power. In this figure, we set $B=1$ MHz and $P_{\max}=0.01$ W. The tasks lengths of these two users are $L_1=1.6\times10^6$ and $L_2=1\times10^6$, respectively. The maximum energy consumption is set to $E_{\max}=0.02$ Joule. In NOMA system, two users offload their tasks on the same frequency band. In OFDMA system, we assume that each user occupies one resource block (frequency band) to offload its task. We can see that offloading data rate increases when the maximum power increases. From this figure, we can see that the scheme with NOMA system can achieve higher data rate than the scheme with the OFDMA system, which demonstrates that NOMA can achieve higher offloading data rate than the OFDMA system.

\begin{figure}[th]
	\centering
	\includegraphics[width=0.6\linewidth]{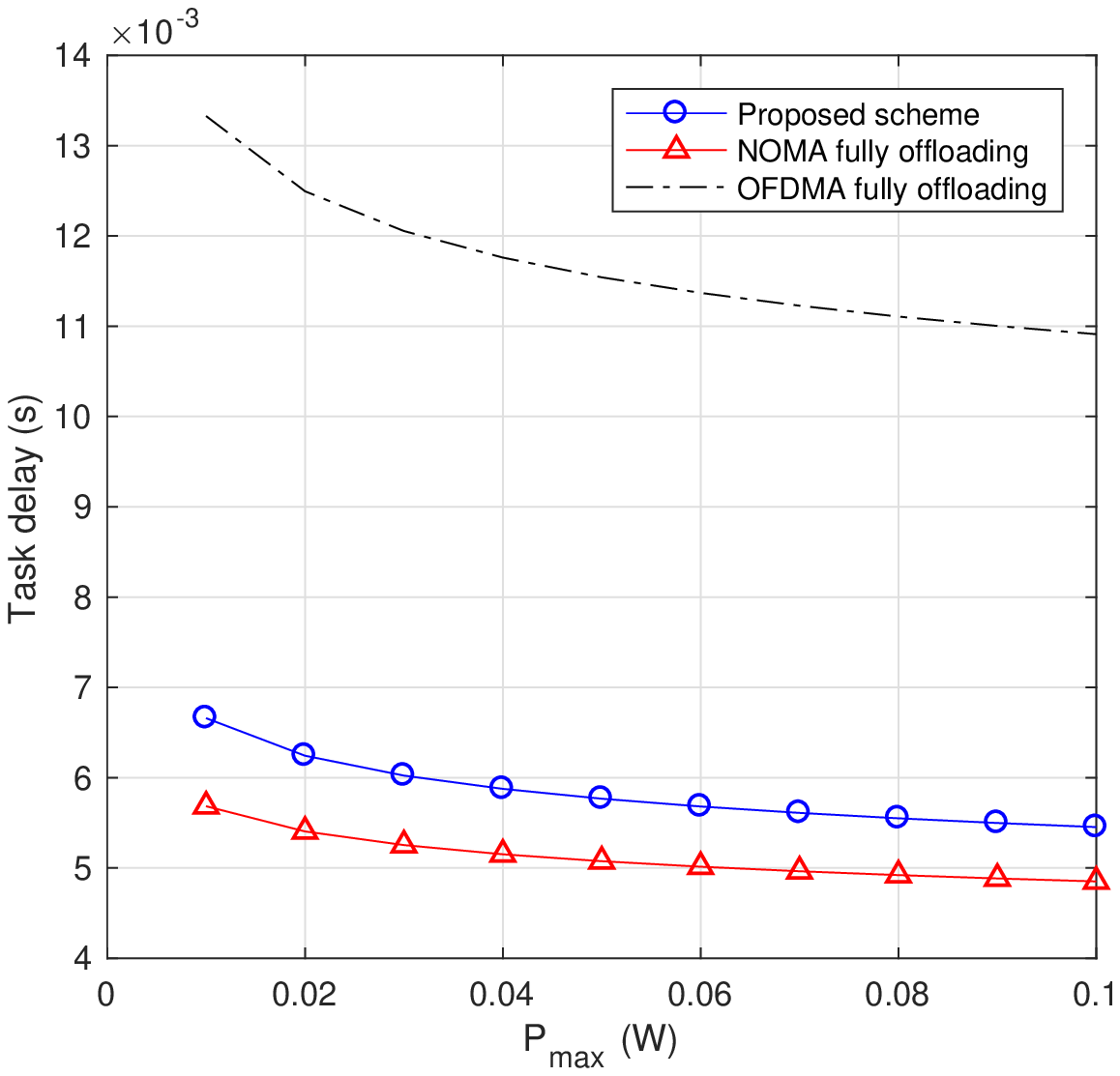}\\
	\caption{The task completion time comparison of different schemes.} \label{Fig4}
\end{figure}

Figure \ref{Fig4} demonstrates the task  completion time comparison between our proposed scheme and the benchmark schemes, OFDMA based offloading scheme, NOMA based fully offloading scheme. Assuming the same task length $L_m=1.6\times10^6$ for both schemes. Our proposed scheme can achieve lower task  completion time than the OFDMA based scheme. This is because NOMA based offloading scheme can provide higher offloading data rate than OFDMA scheme, shown in Fig. \ref{Fig3}. The task completion time of the three schemes decreases when the maximum power increases. This is because larger maximum power range can help provide higher offloading rate, thus less offloading time will be consumed.

\begin{figure}[th]
	\centering
	\includegraphics[width=0.6\linewidth]{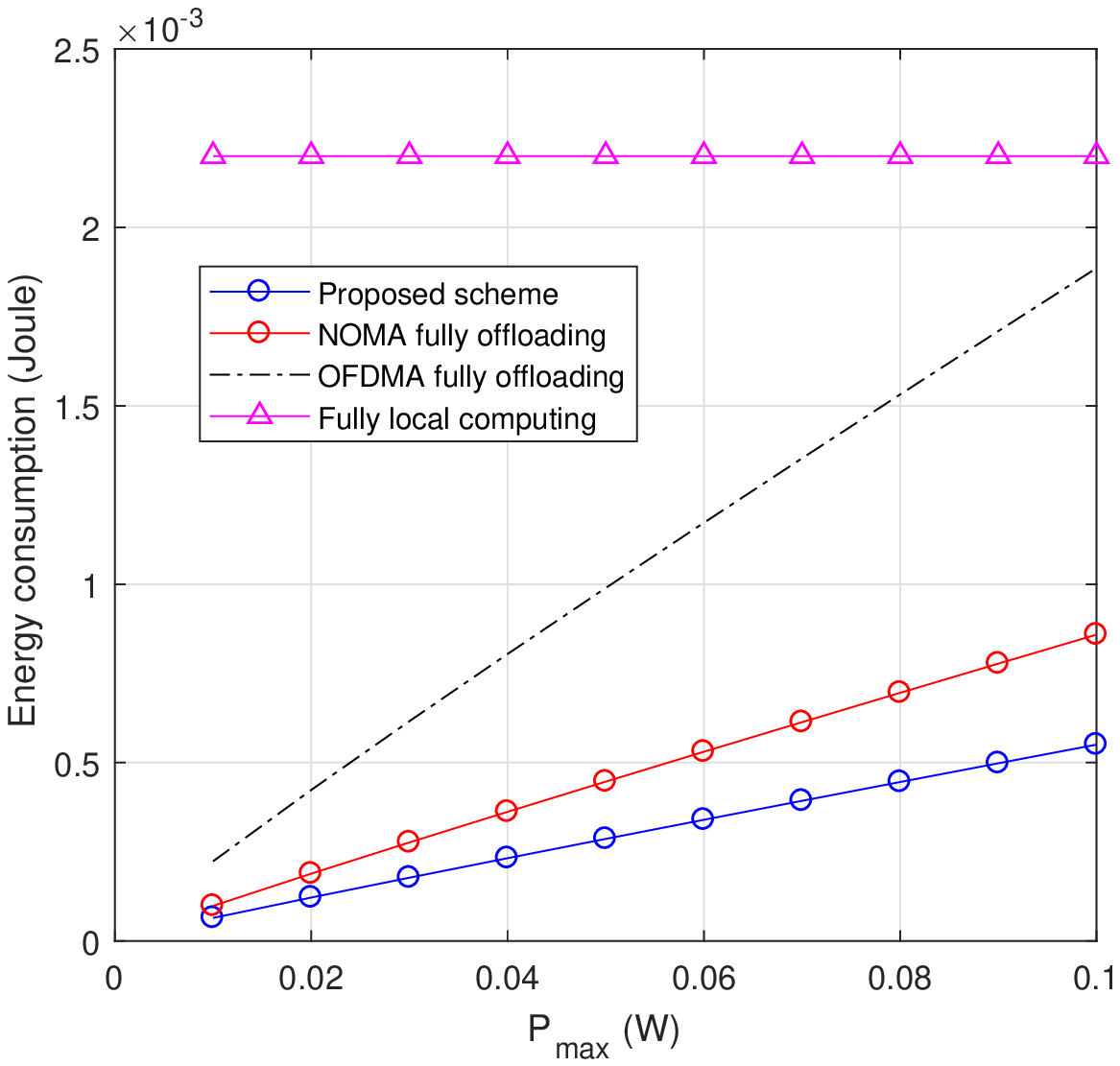}\\
	\caption{The task completion time comparison of different schemes.} \label{Fig5}
\end{figure}

Figure \ref{Fig5} shows the energy consumption comparison of different schemes, including NOMA based fully offloading scheme, NOMA based fully offloading scheme and fully local computing scheme. In this figure, we set $B=1$ MHz and $\kappa_m=2\times10^{-28}$. The tasks lengths of these two users are $L_1=1.6\times10^6$ and $L_2=1\times10^5$, respectively. The CPU frequency is $0.1\times10^{9}$. It can be observed that our proposed NOMA based partial offloading scheme can consume the lowest energy compared to the fully NOMA offloading and OFDMA offloading and fully local computing schemes. The energy consumption increases when the maximum power increases. The energy consumption of local computing scheme stays the same with the increasing maximum power. This is because all that the energy consumption of locally computing scheme is only affected by task length and computing capability of user devices.

\begin{figure}[th]
	\centering
	\includegraphics[width=0.6\linewidth]{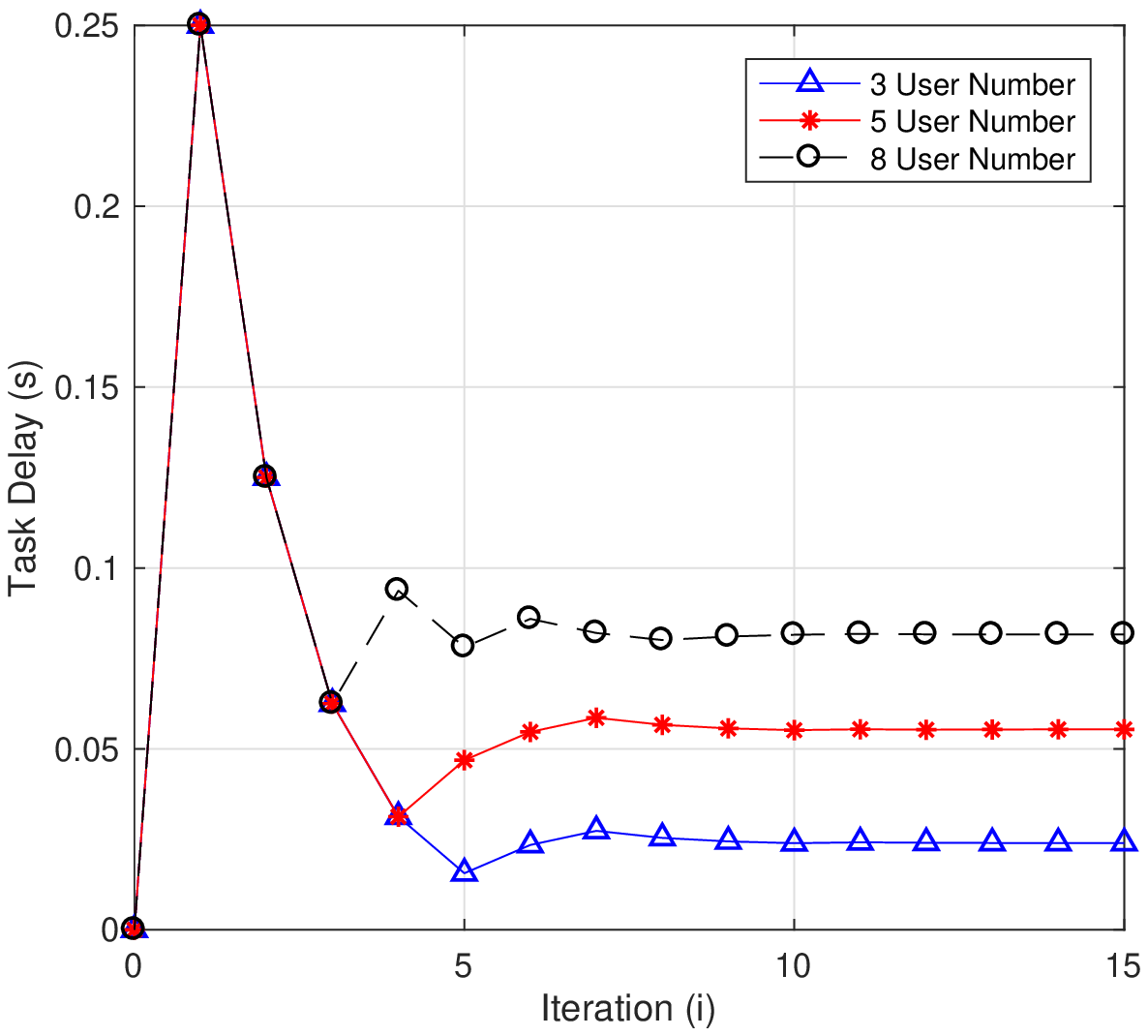}\\
	\caption{The convergence and optimality of Algorithm 1 with multiple users.} \label{Fig6}
\end{figure}

Figure \ref{Fig6} presents the convergence BSS algorithm with multiuser case $M>2$. In this figure, we set $B=1$ MHz and $L_m=1.6\times10^6$ bits, $E_{\max}=0.2$ Joule,  $P_{\max}=0.01$ W, and $\kappa_m=10^{-28}[10,1]$. From this figure, we can see that our proposed BSS algorithm converges within 10 iterations with different user numbers. The scheme with 8 users has the highest completion time than the schemes with 3 and 8 users.

\begin{figure}[th]
	\centering
	\includegraphics[width=0.6\linewidth]{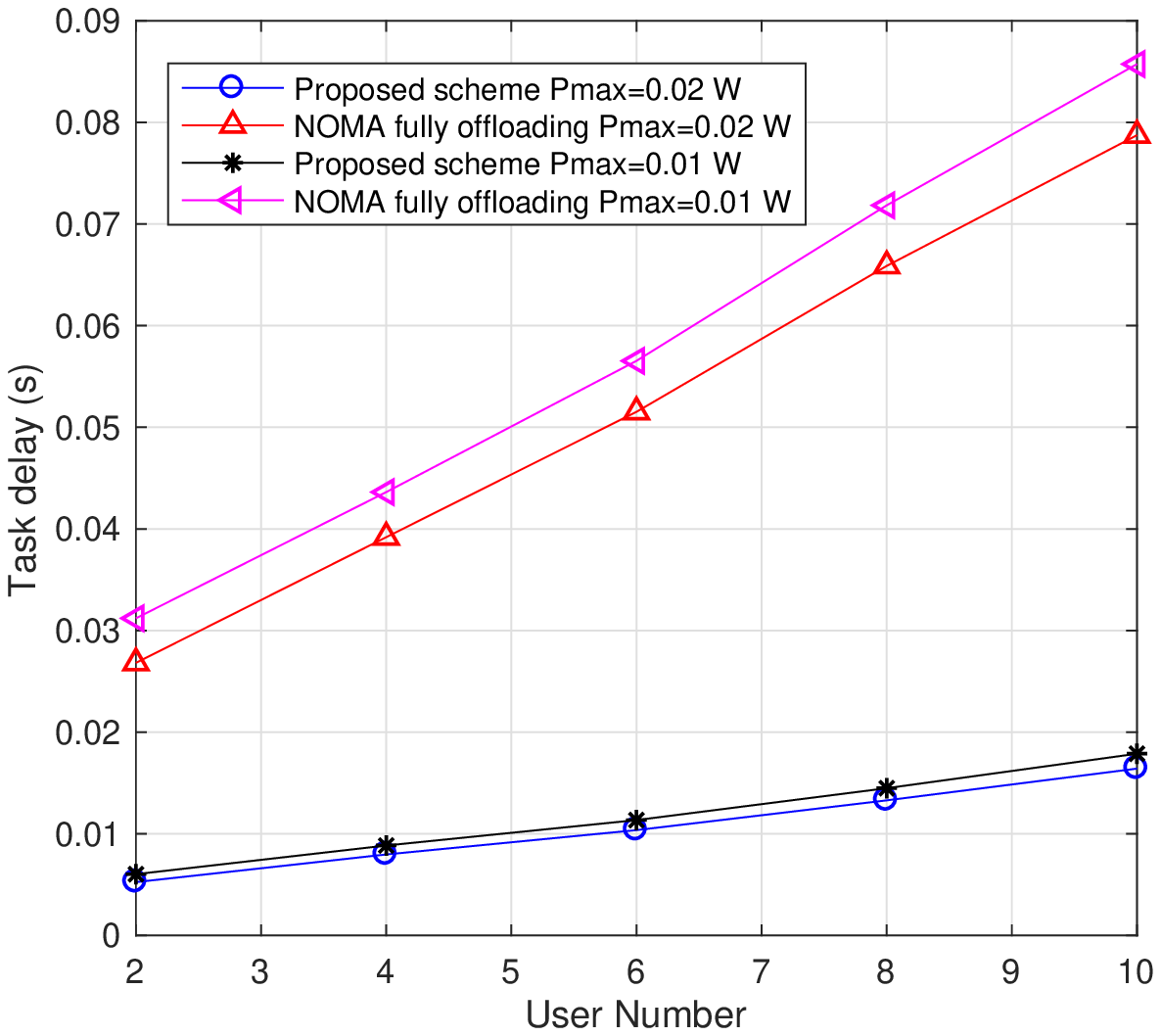}\\
	\caption{The task completion time versus user number with different powers.} \label{Fig7}
\end{figure}

Figure \ref{Fig7} demonstrates the task completion time comparison between NOMA based partial offloading scheme and NOMA fully offloading scheme by considering different offloading power limits. The parameters are set as the same as Fig \ref{Fig6}. In this figure, the task completion time increases when the user number increases. The scheme with higher transmit power limit $P_{\max}=0.02$ W will cost less time than the scheme with $P_{\max}=0.01$ W. It also shows that NOMA based partial offloading scheme can provide better performance than the NOMA based fully offloading scheme. 

\section{Conclusions}
In this paper, we have investigated the optimal task partition and power allocation to minimize task completion time for NOMA multiuser MEC networks. The optimization problem has been formulated as quasi-convex problem. Then BSS algorithm has been proposed to achieve the minimum task completion time, which is the global optimum. To further reduce the complexity of the proposed algorithm, the closed-form optimal power allocation and offloading task ratio expressions have been derived for two-user case via Lagrangian approach based on analytic insights obtained from the analysis. Simulation results have demonstrated the convergence and optimality the proposed schemes, which can provide an effective solution to minimize task completion time for NOMA multiuser MEC networks.
\appendices
\section{Proof of Lemma \ref{Lemma1}} \label{Lemma1Proof}
%In this system, $M$ users offload their tasks to the MEC server  simultaneously through one subchannel. By applying the SIC technique at the receiver, the BS equipped with the MEC server decodes the signals in decreasing order of their channel gains,  
%    $|h_{1}|^2\leq |h_{2}|^2 \leq \cdots \leq |h_{M}|^2$. 
We assume that $M$ users transmit their tasks in the same period $T$, which indicates:
%    \begin{equation}
%       T=T_{1}^{off}=T_{2}^{off}=\cdots=T_{M}^{off} 
%    \end{equation}
%   and
   \begin{equation}\label{T1=T2=T}
       T=\frac{\beta_{1}L_1}{R_{1}} =\frac{\beta_{2}L_2}{R_{2}}=\cdots=\frac{\beta_{M}L_M}{R_{M}}.
    \end{equation}
    Since $\frac{a_1}{b_1}=\frac{a_2}{b_2}=\cdots=\frac{a_M}{b_M}$ can be written by $\frac{a_1}{b_1}=\frac{e_2a_1}{e_2b_1}=\cdots=\frac{e_Ma_1}{e_Mb_1}$ where $e_m=\frac{a_m}{a_1}=\frac{b_m}{b_1}$. Thus $\frac{a_1}{b_1}=\frac{a_1(1+e_2+e_3+\cdots+e_M)}{b_1(1+e_2+e_3+\cdots+e_M)}=\frac{a_1+a_2+\cdots+a_M}{b_1+b_2+\cdots+b_M}$. Therefore,
     \begin{equation}
       T=\frac{\beta_{1}L_1+\beta_{2}L_2+\cdots+\beta_{m}L_m}{R_{1}+R_{2,n}+\cdots+R_{m}} =\frac{\sum\limits_{i=1}^m\beta_{i}L_i}{\sum\limits_{i=1}^mR_{m}}, \forall m.
    \end{equation} 
    The offloading sum rate can be derived as 
    \begin{equation}
    \begin{aligned}
    	 \sum\limits_{i=1}^mR_{i}=&B\log_2(\sigma^2+|h_{1}|^2p_{1})+B\log_2\left(\frac{\sigma^2+|h_{1}|^2p_{1}+|h_{2}|^2p_{2}}{\sigma^2+|h_{1}|^2p_{1}}\right)+\cdots \\
    	 &+B\log_2\left(\frac{\sigma^2+\sum\limits_{i=1}^m|h_{i}|^2p_{i}}{\sigma^2+\sum\limits_{i=1}^{m-1}|h_{i}|^2p_{i}}\right)
    	 =B\log_2\left(\sigma^2+\sum\limits_{i=1}^m|h_{i}|^2p_{i}\right).\\
    \end{aligned}
    \end{equation}
    Define the number of transmitted bits as $L^{off}=\sum\limits_{m=1}^M \beta_{m}L_m$, and $R^{off}=B\log_2\left(\sigma^2+\sum\limits_{m=1}^M|h_{m}|^2p_{m}\right)$. Then the offloading time can be written by
    \begin{equation}\label{Tm_off}
       T=\frac{\sum\limits_{i=1}^m\beta_{i}L_i}{B\log_2\left(\sigma^2+\sum\limits_{i=1}^m|h_{i}|^2p_{i}\right)}, \forall m.
    \end{equation} 
    Now let us prove from \eqref{Tm_off} to \eqref{T1=T2=T}.
    When $m=1$, we have $T=T_1^{off}$.
    When $m=2$, we have $T=\frac{\sum\limits_{i=1}^2\beta_{i}L_i}{\sum\limits_{i=1}^2R_{2}}$.
    Since $\frac{\beta_1L_1}{R_1}=\frac{\beta_1L_1+\beta_2L_2}{R_1+R_2}$, we can have  $\frac{\beta_1L_1}{R_1}=\frac{\beta_2L_2}{R_2}$.
    By deduction, we can have \eqref{T1=T2=T}.
    We finish the proof of Lamma \ref{Lemma1}.
    
    \section{Proof of Proposition \ref{T1=T2}}\label{T_1_off=T_m_off}
    According to Lemma \ref{Lemma1}, the offloading time minimization problem of different users can be represented by 
    \begin{equation}
    \begin{aligned} \label{Problem_proof2}
    %&\mathop {\min \quad \max }\limits_{\{\beta_1,p_1\}}\left\{T_1^{off},T_1^{loc}\right\}\\
    &\mathop {\min \quad \max }\limits_{\{\bm{\beta},\bm{p}\}}\left\{T_1^{off},T_2^{off}, \cdots,T_M^{off}\right\}.
    \end{aligned}
    \end{equation}
    Note that the minimum latency is  $T^*$ with the optimal solution $\{\bm{\beta}^*,\bm{p}^*\}$. This optimal solution is only obtained when $T_m^{off}=T_{m'}^{off},\forall m\neq m'$. Proof by contradiction can be exploited to prove this proposition.

    Assume that $U_{m'}$ decodes its signal firstly, and the optimal solution is obtained when $T_m^{off}>T_{m'}^{off}, m<m'$. Thus the minimum latency is $T^*$ with the optimal solution $\{\beta_m^*,\beta_{m'}^*,p_m^*,p_{m'}^*\}$. In this case, if we increase $p_m^*$ to $\hat{p}_m$, then $T_m^{off}$ will be decreased. Since $p_{m'}^*$ is fixed, and $T_{m'}^{off,*}$ will be increased. Therefore, there must exist $\hat{p}_m$ satisfying $T_m^{off}(p_m^*,p_{m'}^*)>\hat{T}_m^{off}(\hat{p}_m,p_{m'}^*)=\hat{T}_{m'}^{off}(\hat{p}_m,p_{m'}^*)>T_{m'}^{off}(p_m^*,p_{m'}^*)$. Therefore,  $\hat{T}_m^{off}(\hat{p}_m,p_{m'}^*)=\hat{T}_{m'}^{off}(\hat{p}_m,p_{m'}^*)$ should be the optimal time consumption since it has lower value than $T_m^{off}(\beta_m^*,p_m^*)$. This contradicts the assumption that  $\{\beta_m^*,p_m^*\}$ is the optimal solution to problem \eqref{Problem_proof2}.
    
    Assume that the optimal solution is obtained when $T_m^{off}<T_{m'}^{off}, m<m'$. Thus the minimum latency is $T^*$ with the optimal solution $\{\beta_m^*,\beta_{m'}^*,p_m^*,p_{m'}^*\}$. In this case, if we increase $p_m^*$ to $\hat{p}_m$, then $T_m^{off}$ will be decreased. Since $p_{m'}^*$ is fixed, and $T_{m'}^{off,*}$ will be increased. Therefore, there must exist $\hat{p}_m$ satisfying $T_m^{off}(p_m^*,p_{m'}^*)<\hat{T}_m^{off}(\hat{p}_m,p_{m'}^*)=\hat{T}_{m'}^{off}(\hat{p}_m,p_{m'}^*)<T_{m'}^{off}(p_m^*,p_{m'}^*)$. Therefore,  $\hat{T}_m^{off}(\hat{p}_m,p_{m'}^*)=\hat{T}_{m'}^{off}(\hat{p}_m,p_{m'}^*)$ should be the optimal time consumption since it has lower value than $T_{m'}^{off}(\beta_m^*,p_m^*)$. This contradicts the assumption that  $\{\beta_m^*,p_m^*\}$ is the optimal solution to problem \eqref{Problem_proof2}.
    
    Therefore, it can be concluded that the optimal solution to problem \eqref{Problem_proof2} can only be obtained when $T_m^{off}=T_{m'}^{loc}, \forall m\neq m'$.
    
\section{The Proof of Proposition \ref{quasi}} \label{QuasiProof}
The $\alpha$ sublevel sets of $\max \limits_{\{\beta_{m},p_{m}\}}  \left\{\tilde{T}_m^{off},T_m^{loc}, \forall m \right\}$ can be defined as 
\begin{equation}
S_{\alpha_T}=\{\{\beta_{m},p_{m}\}|\max{\left\{\tilde{T}_m^{off},T_m^{loc},\forall m \right\}}\leq \alpha_T\}.
\end{equation}
According to the definition of quasi-convex function \cite{BoydConv2004}, $\max{\left\{\tilde{T}_m^{off},T_m^{loc},\forall m\right\}}$ is quasi-convex if and only if its sublevel sets $S_{\alpha_T}$ is convex for any $\alpha_T$. Let us prove the $S_{\alpha_T}$ is convex set first. In our system, when $\alpha_T\leq0$, there are no solutions satisfying $\max{\left\{\tilde{T}_m^{off},T_m^{loc},\forall m \right\}}\leq\alpha_T$. When $\alpha_T> 0$, $S_{\alpha_T}$ can be rewritten as the following inequality:
\begin{equation}
	\frac{\sum\limits_{i=1}^m\beta_iL_i}{\log_2(1+\sum\limits_{i=1}^m|h_i|^2p_i)}\leq \alpha_T, \forall m,\quad
	\frac{(1-\beta_m)L_mC_m}{f_m^c}\leq\alpha_T, \forall m \label{T_1_c}.
\end{equation}
It can be observed that \eqref{T_1_c} are linear equalities respect to $\beta_1$ and $\beta_2$, respectively. Therefore, \eqref{T_1_c} is convex sets for any $m$. To prove $S_{\alpha_T}$ is strictly quasi-convex, we need to prove that $\sum\limits_{i=1}^m\beta_{i}L_{i}-\log_2(\sigma^2+\sum\limits_{i=1}^m)$ is a strictly convex function. Since the first term $\sum\limits_{i=1}^m\beta_{i}L_{i}$ is liner function and the second term $-\log_2(\sigma^2+\sum\limits_{i=1}^m)$ is convex function because $-\log(\cdot)$ is convex function and the term inside of $-\log$ is a liner function. Let us take two user case for example. The function in \eqref{T_1_c}, $(\beta_1L_1+\beta_2L_2)/\alpha_T-\log_2(1+|h_1|^2p_1^{off}+|h_2|^2p_2^{off})$, is a strictly convex function. To simplify the analysis, we use $a_1$, $a_2$, $b_1$ and $b_2$ to represent $L_1\alpha_T$, $L_2\alpha_T$, $|h_1|^2$ and $|h_2|^2$, respectively. We also use the variables $x_m$ and $y_m$ to respectively present $\beta_m$ and $p_m$ where $m=1, 2$. Thus the function in \eqref{T_1_c} can be rewritten as $f(\bm{x},\bm{y})=(a_1x_1+a_2x_2)-\log_2(1+b_1y_1+b_2y_2)$. To prove its convexity, we need to prove that its Hessian matrix is positive semi-definite matrix. The Hessian matrix of $f(\bm{x},\bm{y})$ is:
\begin{equation} 
\begin{aligned}
\mathbf{H}(f)=
\frac{1}{(1+b_1y_1+b_2y_2)^2\ln2}\left[
\begin{matrix}
0&0&0&0\\
0&0&0&0\\
0 &0&b_1^2 & b_1b_2 \\
0&0&b_1b_2 & b_2^2.
\end{matrix}
\right]
\end{aligned}
\end{equation}
According to the definition of semi-definite, we first define a non-zero column vector $\bm{v}=[v_1,v_2,v_3,v_4]$. To prove $\mathbf{H}$ is positive semi-definite matrix, we need to prove $\boldsymbol{v}^T\mathbf{H}\boldsymbol{v}\geq 0$. We have

\begin{equation} 
\begin{aligned}
\boldsymbol{v}^T\mathbf{H}(f)\boldsymbol{v}=
\frac{(b_1v_3+b_2v_4)^2}{(1+b_1y_1+b_2y_2)^2\ln2}.
\end{aligned}
\end{equation}
Since $v_3$, $v_4$, $b_1>0$ and $b_2>0$ have non-zero values, $\boldsymbol{v}^T\mathbf{H}(f)\boldsymbol{v}>0$. Therefore, Hessian matrix of $f$ is positive definite matrix. Hence \eqref{T_1_c} is a convex set. Since all the sublevel sets are convex, the intersection of these convex sets is convex. Similarly, we could prove the constraint \eqref{eq:E_1_con} is convex set. Therefore, it can be concluded that problem \eqref{Prob:T_min_2} is a quasiconvex problem.

 \section{Proof of Proposition \ref{T_loc=T_off}}\label{T_m_loc=T_m_off}
 The task completion time minimization problem for each user, i.e., $U_m$, can be represented as
 \begin{equation}
 \begin{aligned} \label{Problem_proof}
 \mathop {\min \quad \max }\limits_{\{\beta_m,p_m\}}\left\{T_m^{off}, T_m^{loc}\right\}.
 \end{aligned}
 \end{equation}
 Note that the optimal solution to the above problem is $\{\beta_m^*,p_m^*\}$. This optimal solution is only obtained when $T_m^{off}=T_m^{loc}$. Proof by contradiction can be used to prove this proposition.
 
 Assume that the optimal solution is obtained when $T_m^{off}(\beta_m^*,p_m^*)> T_m^{loc}(\beta_m^*)$. Note that $E_m^{off}$ will increase and $E_m^{loc}$ will decrease when $\beta_m$ increases. Thus the optimal solution can be written as $T_m^{off}(\beta_m^*,p_m^*)$ satisfying the energy constraint and power constraint. In this case, there must exist $\hat{\beta}<\beta_m$ satisfying $T_m^{off}(\beta_m^*,p_m^*)>\hat{T}_m^{off}(\hat{\beta}_m,\hat{p}_m)=\hat{T}_m^{loc}(\hat{\beta}_m,\hat{p}_m)> T_m^{loc}(\beta_m^*)$. Therefore,  $\hat{T}_m^{off}(\hat{\beta}_m,\hat{p}_m)=\hat{T}_m^{loc}(\hat{\beta}_m,\hat{p}_m)$ should be the optimal time consumption since it has lower value than $T_m^{off}(\beta_m^*,p_m^*)$. This contradicts the assumption that $T^*$ is the minimum latency and $\{\beta_m^*,p_m^*\}$ is the optimal solution to problem \eqref{Problem_proof}.
 
 Similarly, assume that the optimal solution is obtained when $T_m^{off}(\beta_m^*,p_m^*)< T_m^{loc}(\beta_m^*)$. Note that $E_m^{off}$ will increase and $E_m^{loc}$ will decrease when $\beta_m$ increases. Thus the optimal solution can be written as $T_m^{off}(\beta_m^*,p_m^*)$ satisfying the energy constraint and power constraint. In this case, there must exist $\hat{\beta}>\beta_m$ satisfying $T_m^{off}(\beta_m^*,p_m^*)<\hat{T}_m^{off}(\hat{\beta}_m,\hat{p}_m)=\hat{T}_m^{loc}(\hat{\beta}_m,\hat{p}_m)< T_m^{loc}(\beta_m^*)$. Therefore,  $\hat{T}_m^{off}(\hat{\beta}_m,\hat{p}_m)=\hat{T}_m^{loc}(\hat{\beta}_m,\hat{p}_m)$ should be the optimal time consumption since it has lower value than $T_m^{off}(\beta_m^*,p_m^*)$. This contradicts the assumption that $\{\beta_m^*,p_m^*\}$ is the optimal solution to problem \eqref{Problem_proof}.
 
 Therefore, it can be concluded that the optimal solution to problem \eqref{Problem_proof} can only be obtained when $T_m^{off}=T_m^{loc}, \forall m$.

\section{Transformation from problem \eqref{Prob:T_min_2_2} to problem \eqref{Prob:p1,p2} }  \label{ObjTransfer}
From \eqref{equalities}, we have
\begin{subequations}
	\begin{align}
		&(1-\beta_1)L_1C_1f_2^{loc}=(1-\beta_2)L_2C_2f_1^{loc}\\
		&\beta_2L_2=L_2-\frac{L_1C_1f_2^{loc}}{C_2f_1^{loc}}+\beta_1\frac{L_1C_1f_2^{loc}}{C_2f_1^{loc}}.
	\end{align}
\end{subequations}
and 
\begin{equation}\label{b1l1}
\beta_1L_1+\beta_2L_2=\beta_1L_1A_1+B_1
\end{equation}  
where $A_1=1+\frac{C_1f_2^{loc}}{C_2f_1^{loc}}$ and $L_2-L_1\frac{C_1f_2^{loc}}{C_2f_1^{loc}}$.
Since 
\begin{equation}
\frac{\beta_1L_1+\beta_2L_2}{R}=\frac{(1-\beta_1)L_1C_1}{f_1^{loc}},
\end{equation}   
we have 
\begin{equation}
\beta_1L_1=\frac{L_1C_1R-B_1f_1^{loc}}{A_1f_1^{loc}+C_1R}.
\end{equation}  
According to \eqref{b1l1}, we have 
\begin{equation}
\beta_1L_1+\beta_2L_2=\frac{L_1C_1A_1R+B_1C_1R}{A_1f_1^{loc}+C_1R}.
\end{equation} 
and 
 \begin{equation}
 	\frac{\beta_1L_1+\beta_2L_2}{R}=\frac{L_1C_1A_1+B_1C_1}{A_1f_1^{loc}+C_1R}=\frac{L_1+L_2}{f_1^{loc}/C_1+f_2^{loc}/C_2+R}.
 \end{equation}
Therefore, problem \eqref{Prob:T_min_2_2} is transformed to problem \eqref{Prob:p1,p2}.

\section{Convexity proof of Problem \eqref{Prob:p1,p2} } \label{ConvexityProof}
In problem \eqref{Prob:p1,p2}, the objective function can be rewritten by  $f(p_1,p_2)=\frac{a_1}{b_1+B\log_2(1+|h_1|^2p_1+|h_2|^2p_2)}$ where $a_1=L_1+L_2$ and $b_1=\frac{f_1^{loc}}{C_1}+\frac{f_2^{loc}}{C_2}$. To simplify the analysis, we rewrite the objective function as $f(p_1,p_2)=\frac{a_1}{b_1+R}$ where $R=B\log_2(1+|h_1|^2p_1+|h_2|^2p_2)$. Note that $f(p_1,p_2)$ is convex function with respect to $p_1$ and $p_2$ according to the convexity properties. 

Now let us prove the convexity of the constraint \eqref{eq:E1} and \eqref{eq:E2}. Since $b_1+R>0$, the constraint \eqref{eq:E1} can be rewritten by 
\begin{subequations}
	\begin{align}
		\kappa_1a_1(f_1^{loc})^3+a_1p_1- E_{\max}(b+R)\leq 0
	\end{align}
\end{subequations}
Let $F(p_1,p_2)=\kappa_1a_1(f_1^{loc})^3+a_1p_1- E_{\max}(b+R)$. To prove its convexity, we need to prove that its Hessian matrix is positive definite matrix. The Hessian matrix of $F(p_1,p_2)$ is:
\begin{equation} 
\begin{aligned}
\mathbf{H}(F)=
\Phi \left[
\begin{matrix}
|h_1|^4 &|h_1|^2|h_2|^2 \\
|h_1|^2|h_2|^2&|h_2|^4
\end{matrix}
\right]
\end{aligned}
\end{equation}
where $\Phi=\frac{E_{\max}}{(1+|h_1|^2p_1+|h_2|^2p_2)^2\ln2}$.
Hessian matrix of a convex function is positive semi-definite. According to the definition of semi-definite, we first define a non-zero column vector $\bm{v}=[v_1,v_2]$. To prove $\mathbf{H}$ is positive definite matrix, we need to prove $\boldsymbol{v}^T\mathbf{H}\boldsymbol{v}>0$. We have

\begin{equation} 
\begin{aligned}
\boldsymbol{v}^T\mathbf{H}(F)\boldsymbol{v}=\Phi 
(v_1|h_1|^2+v_2|h_2|^2)^2.
\end{aligned}
\end{equation}
Since $v_1$, $v_2$ have non-zero values and $\Phi >0$, $\boldsymbol{v}^T\mathbf{H}(F)\boldsymbol{v}>0$. Hence, \eqref{eq:E1} is a convex set. Similarly, we could prove the constraint \eqref{eq:E2} is convex set.
Therefore, problem \eqref{Prob:p1,p2} is convex.

\section{Derivation of the Optimal Solution to Problem \eqref{Prob:p1,p2} }\label{OptimalDerivation2}
%We rewrite problem \eqref{Prob:p1,p2} as follows
%\begin{subequations}\label{Prob:p1}
%	\begin{align}
%	\mathop {\min }\limits_{ \{p_{1},p_{2}\}} \quad &\frac{a_1}{b_1+B\log_2(1+|h_1|^2p_1+|h_2|^2p_2)}\\
%	\text{s.t.}  \quad	&0\leq p_{1}\leq P_{\max},0\leq p_{2}\leq P_{\max},  \label{p_range2}\\\nonumber
%		&\kappa_1(1- \beta_1) L_1C_1 (f_1^{loc})^2\\
%	&+\frac{\beta_1L_1+\beta_2L_2}{\log_2(1+|h_1|^2p_1+|h_2|^2p_2)}p_1 \leq E_{\max}, \label{eq:E12}\\\nonumber
%	&\kappa_2(1- \beta_2) L_2C_2 (f_2^{loc})^2 \\
%	&+\frac{\beta_1L_1+\beta_2L_2}{\log_2(1+|h_1|^2p_1+|h_2|^2p_2)}p_2 \leq E_{\max}\label{eq:E22}\\  \nonumber 
%	\end{align}
%\end{subequations}
In problem \eqref{Prob:p1,p2}, let $a_1=L_1+L_2$ and $b_1=\frac{f_1^{loc}}{C_1}+\frac{f_2^{loc}}{C_2}$.
The Lagrangian function of problem \eqref{Prob:p1,p2} can be written as:
\begin{equation}
\begin{aligned}
\mathcal{L}=&\frac{a_1}{b_1+B\log_2(1+|h_1|^2p_1+|h_2|^2p_2)}+\lambda_1(-p_1)+\lambda_2(p_1-P_{\max})+\lambda_3(-p_2)+\lambda_4(p_2-P_{\max})\\
&+\lambda_{5}\left(\kappa_1a_1(f_1^c)^3+a_1p_1-E_{\max}(b_1+R(p_1,p_2))\right)\\
&+\lambda_{6}\left(\kappa_1a_1(f_1^c)^3+a_1p_2-E_{\max}(b_1+R(p_1,p_2))\right)
\end{aligned}
\end{equation}
where $\lambda_i$ are Lagrangian multipliers corresponding to constraints in problem \eqref{Prob:p1,p2}. Since it is convex problem and satisfies Slater's condition, thus KKT conditions are necessary and sufficient to the optimal solution \cite{BoydConv2004}. To obtain the optimal solution,  KKT conditions (i.e., stationary condition, primal feasible condition, dual feasibility condition and Complementary slackness condition)  of problem \eqref{Prob:p1,p2} can be written as follows:
Stationary condition:
\begin{equation}
	\begin{aligned}
	\frac{\partial\mathcal{L} }{\partial p_1}=&\frac{-a_1\frac{B|h_1|^2}{(1+|h_1|^2p_1^*+|h_2|^2p_2^*)\ln(2)}}{\left(b_1+B\log_2(1+|h_1|^2p_1^*+|h_2|^2p_2^*)\right)^2}-\lambda_1+\lambda_2\label{partial_p1}\\ %\label{partial_p1}\\
	&+\lambda_5\left(a_1-\frac{E_{\max}B|h_1|^2}{(1+|h_1|^2p_1^*+|h_2|^2p_2^*)\ln(2)}\right)=0
	\end{aligned}
\end{equation}
\begin{equation}
\begin{aligned}
\frac{\partial\mathcal{L} }{\partial p_2}=&\frac{-a_1\frac{B|h_2|^2}{(1+|h_1|^2p_1^*+|h_2|^2p_2^*)\ln(2)}}{\left(b_1+B\log_2(1+|h_1|^2p_1^*+|h_2|^2p_2^*)\right)^2}-\lambda_3+\lambda_4\label{partial_p2}\\ %\label{partial_p1}\\
&+\lambda_6\left(a_1-\frac{E_{\max}B|h_2|^2}{(1+|h_1|^2p_1^*+|h_2|^2p_2^*)\ln(2)}\right)=0
\end{aligned}
\end{equation}
Primal feasible condition:
\begin{subequations}
	\begin{align}
	&-p_1^*\leq 0,\ -p_2^*\leq 0\\
	&p_1^*-P_{\max}\leq 0,\ p_2^*-P_{\max}\leq 0\label{lambdaPmax1}\\
	&\kappa_1a_1(f_1^{loc})^3+a_1p_1^* -E_{\max}(b_1+B\log_2(1+|h_1|^2p_1^*+|h_2|^2p_2^*))\leq 0\label{primalE1} \\
    &\kappa_2a_1(f_1^{loc})^3+a_1p_2^*-E_{\max}(b_1+B\log_2(1+|h_1|^2p_1^*+|h_2|^2p_2^*))\leq 0\label{primalE2} \\\nonumber
	\end{align}
\end{subequations}
Dual feasibility condition:
\begin{equation}
\lambda_i \geq 0, i={1, \cdots, 6}.
\end{equation}
Stationary condition:
\begin{subequations}
	\begin{align}
	&\lambda_1p_1^*= 0,\ \lambda_3p_2^*= 0\label{lambdap1}\\
	&\lambda_{2}\left(p_1^*-P_{\max}\right)=0,\ \lambda_{4}\left(p_2^*-P_{\max}\right)=0 \label{lambdaPmax}\\
	&\lambda_{5}\left[\kappa_1a_1(f_1^{loc})^3+a_1p_1^* -E_{\max}\left(b_1+B\log_2\left(1+|h_1|^2p_1^*+|h_2|^2p_2^*\right)\right) \right]= 0 \label{lambdaE1}\\
	&\lambda_{6}\left[\kappa_2a_1(f_1^{loc})^3+a_1p_2^*  -E_{\max}\left(b_1+B\log_2\left(1+|h_1|^2p_1^*+|h_2|^2p_2^*\right)\right) \right]= 0 \label{lambdaE2}\\ \nonumber
	\end{align}
\end{subequations}

In our solution, we have $p_1^*>0$ and $p_2^*>0$. To satisfy \eqref{lambdap1}, we can obtain $\lambda_{1}=0$ and $\lambda_{3}=0$. To satisfy \eqref{partial_p1}, when $a_1\geq \frac{E_{\max}B|h_2|^2}{(1+|h_1|^2p_1^*+|h_2|^2p_2^*)\ln(2)}$, we have at lease one of $\lambda_2$ and $\lambda_5$ is larger than zero. There are three cases to satisfy this condition: 1. $\lambda_2>0, \lambda_5=0$; 2. $\lambda_2=0, \lambda_5>0$; 3. $\lambda_2>0, \lambda_5>0$. when $a_1<\frac{E_{\max}B|h_2|^2}{(1+|h_1|^2p_1^*+|h_2|^2p_2^*)\ln(2)}$Let us first investigate these three cases. 

When $\lambda_2>0,\lambda_5=0$, to satisfy \eqref{lambdaPmax}, we have 
\begin{equation}
p_1^*=P_{\max}. \label{optimal p1=pmax}
\end{equation}
Based on $p_1^*$, to obtain $p_2^*$, we have the following calculation steps.
To satisfy \eqref{partial_p2}, due to the negativity of the first term, we have at least one of $\lambda_4$ and $\lambda_{6}$ is larger than zero based on $a_1>\frac{E_{\max}B|h_2|^2}{(1+|h_1|^2p_1^*+|h_2|^2p_2^*)\ln(2)}$. When $a_1\leq \frac{E_{\max}B|h_2|^2}{(1+|h_1|^2p_1^*+|h_2|^2p_2^*)\ln(2)}$, we must have $\lambda_{4}>0$. To be concluded, we have three	 cases to obtain $p_2^*$: 1. $\lambda_4>0, \lambda_6=0$; 2. $\lambda_4=0, \lambda_6>0$ based on $a_1\geq\frac{E_{\max}B|h_2|^2}{(1+|h_1|^2p_1^*+|h_2|^2p_2^*)\ln(2)}$; 3. $\lambda_4>0, \lambda_6>0$.

\emph{\underline{Case 1}}: $\lambda_2>0,\lambda_5=0$, $\lambda_4>0, \lambda_6=0$, to satisfy \eqref{lambdaPmax}, we have 
\begin{equation}
p_2^*=P_{\max}. \label{optimal p2=pmax}
\end{equation}
In this case, to satisfy the feasible conditions \eqref{primalE1} and \eqref{primalE2}, we must have
\begin{equation}
p_1^*\leq P_{1,w}(p_2^*=P_{\max})
\end{equation} 
where $P_{1,w}(p_2^*)$ is defined in \eqref{p1range_p1}.
\begin{equation}
p_2^*\leq P_{2,w}(p_1^*=P_{\max})
\end{equation}
where $P_{1,w}(p_2^*)$ is defined in \eqref{p2range_p1}.
Therefore, we have the optimal solution \eqref{Case1}.
%\begin{equation}
%\left\{
%\begin{aligned}
%&p_1^*= P_{\max} \quad  &\\
%&p_2^*=P_{\max} \quad   &
%\end{aligned}
%\right.
%\end{equation}
%based on the condition:
%\begin{equation}
%\left\{
%\begin{aligned}
%& P_{\max}\leq P_{1,w}(p_2^*=P_{\max}) \quad  & \\
%& P_{\max} \leq P_{2,w}(p_1^*=P_{\max}). &
%\end{aligned}
%\right.
%\end{equation}
%We need to check if the initial condition $\alpha_T^*<\min\{\frac{L_1C_1}{f_1^{loc}},\frac{L_2C_2}{f_2^{loc}}\}$ can be satisfied.

\emph{\underline{Case 2}}: $\lambda_2>0,\lambda_5=0$, $\lambda_4=0$ and $\lambda_6>0$, to satisfy \eqref{lambdaE2}, we have 
\begin{equation}
\begin{aligned}
\kappa_2a_1(f_2^{loc})^3+a_1p_2^*  - E_{\max}\left(b_1+B\log_2\left(1+|h_1|^2p_1^*+|h_2|^2p_2^*\right)\right) = 0
\end{aligned}
\end{equation}
Thus we have
\begin{equation}
p_2^*=P_{2,w}(p_1^*=P_{\max}). \label{optimal p2=W}
\end{equation}
In this case, we need to satisfy primal feasible condition as $ P_{2,w}(p_1^*=P_{\max})\leq P_{\max}$
Therefore, we can have the optimal solution \eqref{Case2}.
%\begin{equation}
%\left\{
%\begin{aligned}
%&p_1^*= P_{\max} \quad  &\\
%&p_2^*=P_{2,w}(p_1^*=P_{\max}) \quad   &
%\end{aligned}
%\right.
%\end{equation}
%based on the condition:
%\begin{equation}
%\left\{
%\begin{aligned}
%& P_{\max}\leq P_{1,w}(p_2^*=P_{\max}) \quad  & \\
%& P_{\max} \geq P_{2,w}(p_1^*=P_{\max}). &
%\end{aligned}
%\right.
%\end{equation}
 
When $\lambda_2>0,\lambda_5=0$, $\lambda_4>0$ and $\lambda_6>0$, to satisfy \eqref{lambdaPmax} and \eqref{lambdaE2}, we have \eqref{optimal p2=pmax}=\eqref{optimal p2=W}, which leads to
\begin{equation}
P_{\max}=P_{2,w}(p_1^*=P_{\max}). 
\end{equation}	
This case can be included into the above cases, thus this case can be ignored.
	
When $\lambda_2=0, \lambda_5>0$, to satisfy \eqref{lambdaE1}, we have 
\begin{equation}
\begin{aligned}
      \kappa_1a_1(f_1^{loc})^3+a_1p_1^* - E_{\max}\left(b_1+B\log_2\left(1+|h_1|^2p_1^*+|h_2|^2p_2^*\right)\right) = 0. 
\end{aligned}
\end{equation}
Thus we have
\begin{equation}
p_1^*=P_{1,w}(p_2^*). \label{optimal p1=W}
\end{equation}
To obtain $p_2^*$, we also need to discus three cases: i). $\lambda_4>0, \lambda_6=0$; ii). $\lambda_4=0, \lambda_6>0$ based on $a_1\geq\frac{E_{\max}B|h_2|^2}{(1+|h_1|^2p_1^*+|h_2|^2p_2^*)\ln(2)}$; iii). $\lambda_4>0, \lambda_6>0$. 
similarly, Case iii) can be included within case i) or ii).

\emph{\underline{Case 3}}: $\lambda_2=0, \lambda_5>0$, $\lambda_4>0$ and $\lambda_6=0$, to satisfy \eqref{lambdaPmax}, we have 
\begin{equation}
p_2^*=P_{\max}. %\label{optimal p2=pmax}
\end{equation}
In this case, to satisfy the feasible conditions \eqref{primalE1} and \eqref{primalE2}, we must have
\begin{equation}
p_1^*= P_{1,w}(p_2^*=P_{\max})\leq P_{\max}
\end{equation} 
and 
\begin{equation}
p_2^*=P_{\max}\leq P_{2,w}(p_1^*=P_{1,w}).
\end{equation} 
Therefore, we have the optimal solution as \eqref{Case3}.
%\begin{equation}
%\left\{
%\begin{aligned}
%&p_1^*=  P_{1,w}(p_2^*=P_{\max}) \quad  &\\
%&p_2^*=P_{\max} \quad   &
%\end{aligned}
%\right.
%\end{equation}
%based on the condition:
%\begin{equation}
%\left\{
%\begin{aligned}
%& P_{1,w}(p_2^*=P_{\max})\leq P_{\max} \quad  & \\
%& P_{\max}\leq P_{2,w}(p_1^*=P_{1,w}). &
%\end{aligned}
%\right.
%\end{equation}

\emph{\underline{Case 4}}: $\lambda_2=0, \lambda_5>0$, $\lambda_4=0$ and $\lambda_6>0$, to satisfy \eqref{lambdaE1} and \eqref{lambdaE2}, we have 
\begin{subequations}
\begin{align}\nonumber
\kappa_1a_1(f_1^{loc})^3+a_1p_1^*  - E_{\max}\left(b_1+B\log_2\left(1+|h_1|^2p_1^*+|h_2|^2p_2^*\right)\right) = 0\\
\kappa_2a_1(f_2^{loc})^3+a_1p_2^*  - E_{\max}\left(b_1+B\log_2\left(1+|h_1|^2p_1^*+|h_2|^2p_2^*\right)\right) = 0.\\\nonumber
\end{align}
\end{subequations}
In this case, we can derive the optimal solution Thus we have optimal solution \eqref{Case44}.
%This leads to 
%\begin{equation}\label{Case4}
%\left\{
%\begin{aligned}
%p_1^*=&\kappa_2(f_2^{loc})^3-\kappa_1(f_1^{loc})^3+p_2^*\\
%p_2^*=&-\frac{\mathcal{W}_0\left(-\frac{B_2\log(2)}{(|h_1|^2+|h_2|^2)}2^{\left(-\frac{B_2}{(|h_1|^2+|h_2|^2)}+A_2\right)}\right)}{B_2\log(2)}\\
%&-\frac{1+|h_1|^2\left(\kappa_2(f_2^{loc})^3-\kappa_1(f_1^{loc})^3\right)}{(|h_1|^2+|h_2|^2)}. 
%\end{aligned}
%\right.
%\end{equation}
%In this case, we need to satisfy primal feasible condition$ P_{1,w}(p_2^*)\leq P_{\max}$ and $ P_{2,w}(p_1^*)\leq P_{\max}$.
%Therefore, we can have the optimal solution \eqref{Case4} based on the condition:
%\begin{equation}
%\left\{
%\begin{aligned}
%& P_{1,w}(p_2^*)\leq P_{\max} \quad  & \\
%& P_{2,w}(p_1^*)\leq P_{\max}. &
%\end{aligned}
%\right.
%\end{equation}
Above all, the optimal solution of problem \eqref{Prob:p1,p2} can be concluded as four cases. 

\bibliographystyle{IEEEtran}
\bibliography{EEref}

% Generated by IEEEtran.bst, version: 1.14 (2015/08/26)
\begin{thebibliography}{10}
\providecommand{\url}[1]{#1}
\csname url@samestyle\endcsname
\providecommand{\newblock}{\relax}
\providecommand{\bibinfo}[2]{#2}
\providecommand{\BIBentrySTDinterwordspacing}{\spaceskip=0pt\relax}
\providecommand{\BIBentryALTinterwordstretchfactor}{4}
\providecommand{\BIBentryALTinterwordspacing}{\spaceskip=\fontdimen2\font plus
\BIBentryALTinterwordstretchfactor\fontdimen3\font minus
  \fontdimen4\font\relax}
\providecommand{\BIBforeignlanguage}[2]{{%
\expandafter\ifx\csname l@#1\endcsname\relax
\typeout{** WARNING: IEEEtran.bst: No hyphenation pattern has been}%
\typeout{** loaded for the language `#1'. Using the pattern for}%
\typeout{** the default language instead.}%
\else
\language=\csname l@#1\endcsname
\fi
#2}}
\providecommand{\BIBdecl}{\relax}
\BIBdecl

\bibitem{PMMECSurvey2017}
P.~Mach and Z.~Becvar, ``Mobile edge computing: A survey on architecture and
  computation offloading,'' \emph{IEEE Commun. Surveys Tuts.}, vol.~19, no.~3,
  pp. 1628--1656, 3rd quarter 2017.

\bibitem{DingIEEEMag16}
Z.~Ding, Y.~Liu, J.~Choi, Q.~Sun, M.~Elkashlan, C.~L. I, and H.~V. Poor,
  ``Application of non-orthogonal multiple access in {LTE} and 5{G} networks,''
  \emph{IEEE Commun. Mag.}, vol.~55, no.~2, pp. 185--191, Feb. 2017.

\bibitem{DingJSACSurvey2017}
Z.~Ding, X.~Lei, G.~K. Karagiannidis, R.~Schober, J.~Yuan, and V.~K. Bhargava,
  ``A survey on non-orthogonal multiple access for 5{G} networks: {R}esearch
  challenges and future trends,'' \emph{IEEE J. Sel. Areas Commun.}, vol.~35,
  no.~10, pp. 2181--2195, Oct. 2017.

\bibitem{DaiSurvey2018}
L.~Dai, B.~Wang, Z.~Ding, Z.~Wang, S.~Chen, and L.~Hanzo, ``A survey of
  non-orthogonal multiple access for {5G},'' \emph{IEEE Commun. Surveys Tuts.},
  vol.~20, no.~3, pp. 2294--2323, 3rd quarter 2018.

\bibitem{3GPP}
``{Study on downlink multiuser supersition transmission ({MUST}) for {LTE}
  (Release 13)},'' 3GPP TR, Tech. Rep. 36.859, Tech. Rep., Mar. 2015.

\bibitem{AKiani2018JIOT}
A.~Kiani and N.~Ansari, ``Edge computing aware {NOMA} for 5{G} networks,''
  \emph{IEEE Internet of Things Journal}, vol.~5, no.~2, pp. 1299--1306, April
  2018.

\bibitem{WShiIEEEIoT2016}
W.~Shi, J.~Cao, Q.~Zhang, Y.~Li, and L.~Xu, ``Edge computing: Vision and
  challenges,'' \emph{IEEE Internet of Things Journal}, vol.~3, no.~5, pp.
  637--646, Oct 2016.

\bibitem{YMaoMECSureveys2017}
Y.~Mao, C.~You, J.~Zhang, K.~Huang, and K.~B. Letaief, ``A survey on mobile
  edge computing: The communication perspective,'' \emph{IEEE Commun. Surveys
  Tuts.}, vol.~19, no.~4, pp. 2322--2358, 4th quarter 2017.

\bibitem{SAMECSurvey2014}
S.~Abolfazli, Z.~Sanaei, E.~Ahmed, A.~Gani, and R.~Buyya, ``Cloud-based
  augmentation for mobile devices: Motivation, taxonomies, and open
  challenges,'' \emph{IEEE Commun. Surveys Tuts.}, vol.~16, no.~1, pp.
  337--368, 1st quarter 2014.

\bibitem{HZhangFogIEEEWC2017}
H.~Zhang, Y.~Qiu, X.~Chu, K.~Long, and V.~C.~M. Leung, ``Fog radio access
  networks: Mobility management, interference mitigation, and resource
  optimization,'' \emph{IEEE Wireless Communications}, vol.~24, no.~6, pp.
  120--127, Dec 2017.

\bibitem{YMaoTWC2017}
Y.~Mao, J.~Zhang, S.~H. Song, and K.~B. Letaief, ``Stochastic joint radio and
  computational resource management for multi-user mobile-edge computing
  systems,'' \emph{IEEE Trans. Wireless Commun.}, vol.~16, no.~9, pp.
  5994--6009, Sep. 2017.

\bibitem{TXTranTVT2018}
T.~X. Tran and D.~Pompili, ``Joint task offloading and resource allocation for
  multi-server mobile-edge computing networks,'' \emph{IEEE Trans. Veh.
  Technol.}, vol.~68, no.~1, pp. 856--868, Jan. 2019.

\bibitem{FWangTWC2018}
F.~Wang, J.~Xu, X.~Wang, and S.~Cui, ``Joint offloading and computing
  optimization in wireless powered mobile-edge computing systems,'' \emph{IEEE
  Trans. Wireless Commun.}, vol.~17, no.~3, pp. 1784--1797, March 2018.

\bibitem{SBIEEETWC2018}
S.~Bi and Y.~J. Zhang, ``Computation rate maximization for wireless powered
  mobile-edge computing with binary computation offloading,'' \emph{IEEE Trans.
  Wireless Commun.}, vol.~17, no.~6, pp. 4177--4190, June 2018.

\bibitem{QPhamAccess2018}
Q.~{Pham}, T.~{Leanh}, N.~H. {Tran}, B.~J. {Park}, and C.~S. {Hong},
  ``Decentralized computation offloading and resource allocation for
  mobile-edge computing: A matching game approach,'' \emph{IEEE Access},
  vol.~6, pp. 75\,868--75\,885, 2018.

\bibitem{QPhamAccess19}
Q.~{Pham}, L.~B. {Le}, S.~{Chung}, and W.~{Hwang}, ``Mobile edge computing with
  wireless backhaul: Joint task offloading and resource allocation,''
  \emph{IEEE Access}, vol.~7, pp. 16\,444--16\,459, 2019.

\bibitem{CYouTWC2017}
C.~You, K.~Huang, H.~Chae, and B.~Kim, ``Energy-efficient resource allocation
  for mobile-edge computation offloading,'' \emph{IEEE Trans. Wireless
  Commun.}, vol.~16, no.~3, pp. 1397--1411, March 2017.

\bibitem{SaitoIEEENoma13}
Y.~Saito, Y.~Kishiyama, A.~Benjebbour, T.~Nakamura, A.~Li, and K.~Higuchi,
  ``Non-orthogonal multiple access ({NOMA}) for cellular future radio access,''
  in \emph{Proc. IEEE 77th Veh, Technol. Conf.}, June 2013.

\bibitem{YSunIEEETCOM17}
Y.~Sun, D.~W.~K. Ng, Z.~Ding, and R.~Schober, ``Optimal joint power and
  subcarrier allocation for full-duplex multicarrier non-orthogonal multiple
  access systems,'' \emph{IEEE Trans. Commun.}, vol.~65, no.~3, pp. 1077--1091,
  Mar. 2017.

\bibitem{FangIEEETrans16}
F.~Fang, H.~Zhang, J.~Cheng, and V.~C.~M. Leung, ``Energy-efficient resource
  allocation for downlink non-orthogonal multiple access network,'' \emph{IEEE
  Trans. Commun.}, vol.~64, no.~9, pp. 3722--3732, Sept. 2016.

\bibitem{FangJSAC17}
F.~Fang, H.~Zhang, J.~Cheng, S.~Roy, and V.~C.~M. Leung, ``Joint user
  scheduling and power allocation optimization for energy-efficient {NOMA}
  systems with imperfect {CSI},'' \emph{IEEE J. Sel. Areas Commun.}, vol.~35,
  no.~12, pp. 2874--2885, Dec. 2017.

\bibitem{HZhangMagzine2018}
H.~Zhang, F.~Fang, J.~Cheng, K.~Long, W.~Wang, and V.~C.~M. Leung,
  ``Energy-efficient resource allocation in noma heterogeneous networks,''
  \emph{IEEE Wireless Commun.}, vol.~25, no.~2, pp. 48--53, April 2018.

\bibitem{ZDing2018TCOM}
Z.~Ding, P.~Fan, and H.~V. Poor, ``{Impact of Non-orthogonal Multiple Access on
  the Offloading of Mobile Edge Computing},'' \emph{Submitted to IEEE Trans.
  Commun.}, 2018.

\bibitem{FWang2018TCOM}
F.~Wang, J.~Xu, and Z.~Ding, ``Multi-antenna noma for computation offloading in
  multiuser mobile edge computing systems,'' \emph{IEEE Tran. Commun.}, 2018
  (Early Access).

\bibitem{ZDing2018WCL}
Z.~Ding, J.~Xu, O.~A. Dobre, and H.~V. Poor, ``Joint power and time allocation
  for noma-mec offloading,'' \emph{Submitted to IEEE Wireless. Commun. Lett.},
  2018.

\bibitem{ZDing2018WCLDely}
Z.~Ding, D.~W.~K. Ng, R.~Schober, and H.~V. Poor, ``Delay minimization for
  noma-mec offloading,'' \emph{IEEE Signal Processing Letters}, vol.~25,
  no.~12, pp. 1875--1879, Dec 2018.

\bibitem{YPanCL2018}
Y.~Pan, M.~Chen, Z.~Yang, N.~Huang, and M.~Shikh-Bahaei, ``Energy efficient
  noma-based mobile edge computing offloading,'' \emph{IEEE Commun. Lett.},
  2018 (Early Access).

\bibitem{XDiaoAccess2019D2D}
X.~Diao, J.~Zheng, Y.~Wu, and Y.~Cai, ``Joint computing resource, power and
  channel allocations for d2d-assisted and noma-based mobile edge computing,''
  \emph{IEEE Access}, to be appeared in 2019.

\bibitem{LPSIC2018IoT}
L.~P. Qian, A.~Feng, Y.~Huang, Y.~Wu, B.~Ji, and Z.~Shi, ``Optimal sic ordering
  and computation resource allocation in mec-aware noma nb-iot networks,''
  \emph{IEEE Internet of Things Journal}, 2018 (Early Access).

\bibitem{FangGlobecome2019}
F.~Fang, Y.~Xu, Z.~Ding, C.~Shen, M.~Peng, and G.~K. Karagiannidis, ``Optimal
  task partition and power allocation for {NOMA} enabled multiuser mobile edge
  computing,'' in \emph{To be Submitted to IEEE Globecom 2019}, Dec. 2019.

\bibitem{YXuTSP2017}
Y.~Xu, C.~Shen, Z.~Ding, X.~Sun, S.~Yan, G.~Zhu, and Z.~Zhong, ``Joint
  beamforming and power-splitting control in downlink cooperative {SWIPT NOMA}
  systems,'' \emph{IEEE Trans. Signal Process.}, vol.~65, no.~18, pp.
  4874--4886, Sep. 2017.

\bibitem{ChongChiConvex}
W.-C.~L. C.-Y.~Chi and C.-H. Lin, \emph{Convex Optimization for Signal
  Processing and Communications}.\hskip 1em plus 0.5em minus 0.4em\relax CRC
  Press, 2017.

\bibitem{BoydConv2004}
S.~Boyd and L.~Vandenberghe, \emph{Convex Optimization}.\hskip 1em plus 0.5em
  minus 0.4em\relax Cambridge University Press, 2004.

\end{thebibliography}
\vspace{0.5em}
\end{document}